\documentclass[reqno,12pt,letterpaper]{amsart}
\usepackage{amsmath,amssymb,amsthm,graphicx,mathrsfs,url}
\usepackage[usenames,dvipsnames]{color}
\usepackage[colorlinks=true,linkcolor=Red,citecolor=Green]{hyperref}
\usepackage{amsxtra}
\usepackage{placeins}
\usepackage{wasysym} 
\usepackage{graphicx}
\usepackage{subcaption}
\def\arXiv#1{\href{http://arxiv.org/abs/#1}{arXiv:#1}}
\usepackage{array}
\newcolumntype{P}[1]{>{\centering\arraybackslash}m{#1}}

\let\Im=\Imag

\DeclareMathOperator{\Op}{Op}

\let\Im=\Imag 

\def\?[#1]{\textbf{[#1]}\marginpar{\Large{\textbf{??}}}}

\def\smallsection#1{\smallskip\noindent\textbf{#1}.}
\let\epsilon=\varepsilon 

\newtheorem{theo}{Theorem}
\newtheorem{prop}{Proposition}[section]

\numberwithin{equation}{section}

\DeclareMathOperator{\Spec}{Spec}

\title[Periodic band inversion]{A mathematical study of periodic band inversion}

\author{Tyler Guo}
\email{tylerguo@berkeley.edu}
\author{Maciej Zworski} 
\email{zworski@berkeley.edu}
\address{Department of Mathematics, University of California, Berkeley, CA 94720}

\begin{document}
\begin{abstract}
    We give a mathematical analysis of the periodic band inversion phenomenon observed by Tan--Devakul for an electron in a two-dimensional periodic potential coupled to a circularly polarized photon cavity mode. In the strong-coupling limit, we derive an effective Bloch Hamiltonian and prove convergence of the low-lying bands. For a cosine potential, we explain the periodic closing and reopening of the first spectral gap, prove the existence and generic persistence of Dirac cones at the gap-closing points, and compute the Chern numbers associated to isolated band clusters. We also show that higher isolated band clusters cannot persist in the small-coupling regime. Finally, we resolve an apparent sign discrepancy between Berry curvature computations and Chern numbers by tracking the descent from the covering space to the Brillouin torus.
\end{abstract}

\maketitle

\section{The Hamiltonian}

We study an interesting Hamiltonian introduced by Tan--Devakul \cite{tri}. 
To describe it we let $ V \in C^\infty ( \mathbb R^2/\Gamma ; \mathbb R ) $ be a $ \Gamma$-periodic potential where we take $ \Gamma = 2 \pi \mathbb Z^2 $. The specific potential considered in \cite{tri} was 
$ V ( x ) =  V_0 ( \cos x_1 + \cos x_2 ) $. 

\begin{figure}
\includegraphics[width=16cm]{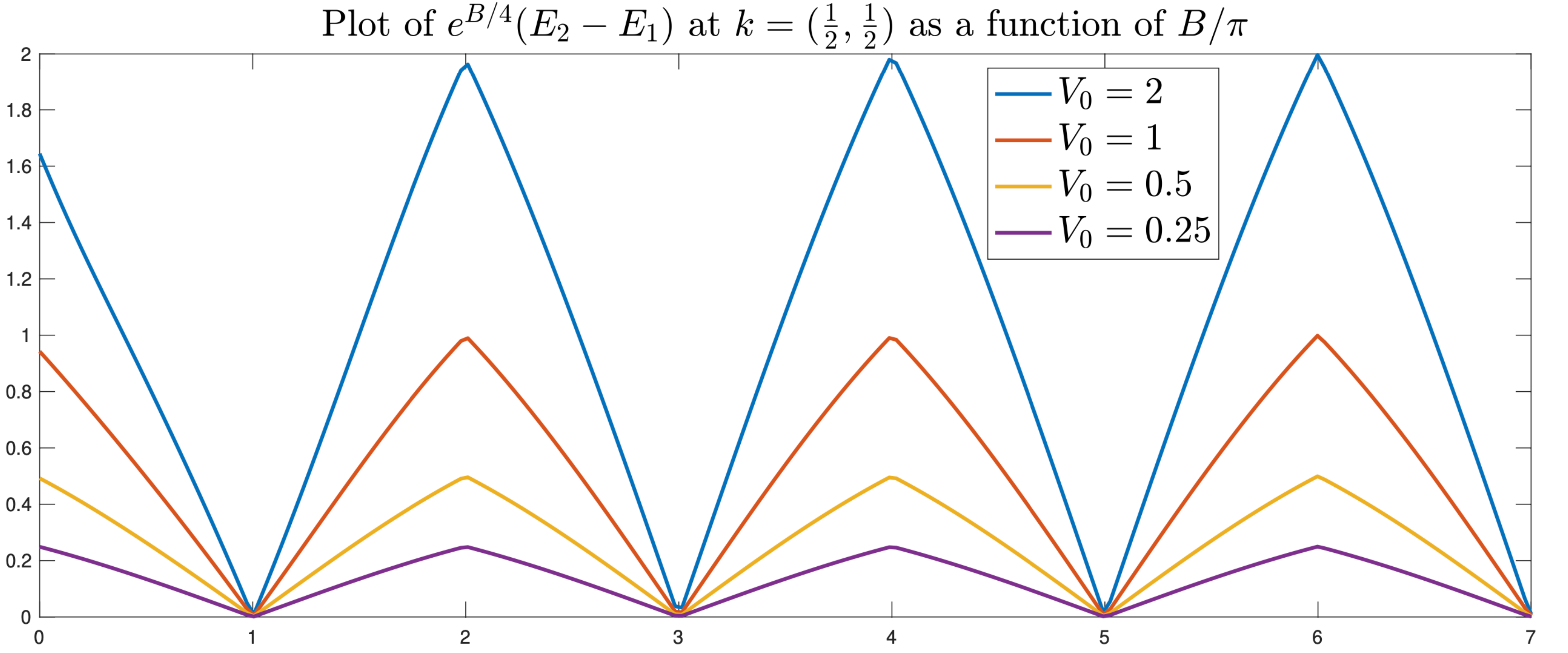}
\caption{\label{f:1} Reinterpretation of \cite[Figure 2(b)]{tri}: after rescaling the gap for the effective Hamiltonian 
\eqref{eq:effH} 
displays an approximate periodicity: $ e^{\frac{B}{4}} ( E_2 - E_1 ) = V_0 ( | |\cos(\frac{B}{4})| - |\sin(\frac{B}{4})||) + 
 \mathcal O ( V_0^2 e^{ - \frac{B}{4}} ) $.  For an animated version see \url{https://math.berkeley.edu/~zworski/gap_movie.mp4}.}
\end{figure}

We use the following notation for $\partial_z $ and annihilation operators:
\[   \mathcal D = 2 D_z = D_{x_1} - i D_{x_2}   , \ \ z = x_1 +  i x_2  , \ \ \
A =  \tfrac1 {\sqrt 2 } ( D_w - i w ) , \ \ w \in \mathbb R , \ \   D = \tfrac 1 i \partial . 
 \]
The Hamiltonian in \cite{tri} is the following self-adjoint operator on $ L^2 ( \mathbb R^2_x \times \mathbb R_w, dx  dw  ) $ which models an electron in a 2D crystal coupled to a circularly polarized photon cavity mode \cite{tri2}:
\begin{equation}
\label{eq:Ham}  H_J := \mathcal D \mathcal D^* + J ( A - \lambda \mathcal D )^* ( A - \lambda \mathcal D) + V ( x) . 
\end{equation}

The operator $A$ describes a single photon mode, with $J$ setting the photon excitation scale.  Following the cavity interpretation in \cite{tri2}, the interaction couples this oscillator to one circular component of the electron momentum.  The choices $2D_z$ and $2D_{\bar z}$ correspond to opposite circular polarizations, or photon helicities.  We choose the $D_z$ helicity so that the signs of the parent Berry curvature and of the induced Chern bands agree directly with \cite{tri}.  For real $V$, complex conjugation intertwines the two helicities (and sends $k$ to $-k$), so their band energies and gaps agree, whereas all Berry curvatures and Chern numbers have opposite signs.

The operator $ H_J $ is $ \Gamma $-periodic in $ z = x_1 + i x_2 $ so it is natural to consider 
Bloch--Floquet spectrum, 
\begin{equation}
\label{eq:defEj} 
\begin{gathered} E_1 ( k, B, J  ) \leq E_2 ( k , B, J  ) \leq \cdots \leq E_n ( k , B , J ) \to \infty ,  \\   k \in \mathbb C/\mathbb Z^2 , \ \ \ 
B = 2 \lambda^2,  \ \ k = k_1 + i k_2 \in \mathbb C ,
\end{gathered} 
\end{equation}
given by the eigenvalues of 
\begin{equation}
\label{eq:Hk}  H_J ( k ) := ( \mathcal D-\bar k  )( \mathcal D -\bar k )^* + J ( A - \lambda (\mathcal D-\bar k  ) )^* ( A - \lambda ( \mathcal D -\bar k  ) ) + V ( x ) . \end{equation}
with periodic boundary conditions in $ x $ with respect to $ \Gamma = 2 \pi \mathbb Z^2 $.

\begin{figure}
\includegraphics[width=14cm]{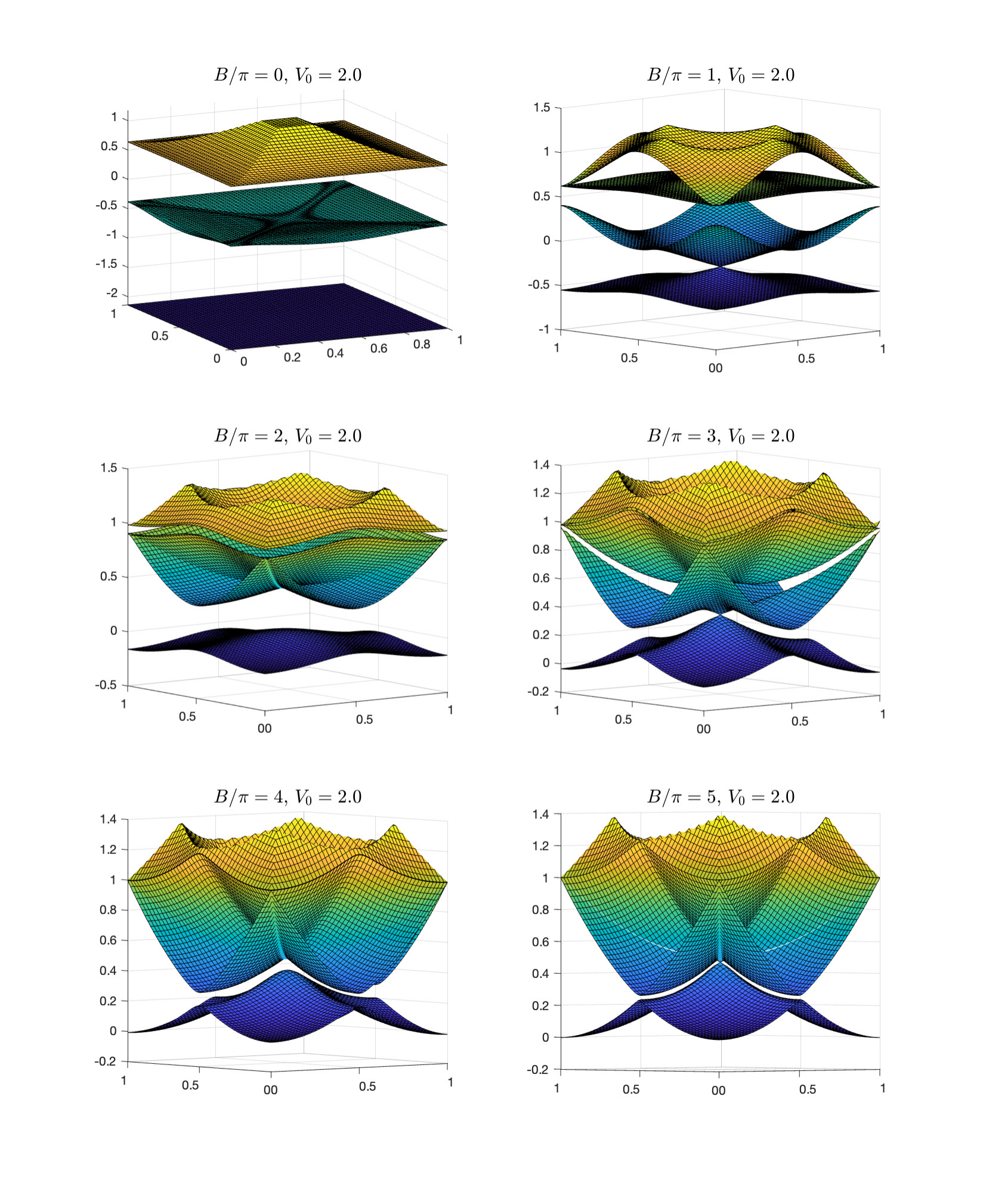}
\caption{\label{f:2} The first four bands for the effective Hamiltonian \eqref{eq:effH}
plotted over the Brillouin torus. Perturbation theory shows that
the bands for \eqref{eq:Hk} are within $ \mathcal O ( 1/J ) $ of the bands of \eqref{eq:effH}. That means that separated bands remain
separated (for large $ J $) and the change of Chern numbers computed in \S \ref{s:chern} shows that they have to touch.
For an animated plot of $ k \mapsto E_j ( (k,k), B ) $ see \url{https://math.berkeley.edu/~zworski/TD_movie.mp4}.}
\end{figure}

Tan--Devakul \cite{tri} considered the case of $ J \to \infty $ and made striking observations about the bands and their topology:  as a function of $ B $ the gap between the bands, 
\begin{equation}
\label{eq:defg}  g ( B ) := \min_{ k \in \mathbb C/\mathbb Z^2 } (E_2 ( k, B  ) - E_1 ( k , B  ) ), \end{equation}
oscillates with regularly spaced zeros and local maxima (see \cite[Figure 2(b)]{tri} where
$ g ( ( 2 \ell + 1 ) \pi ) = 0 $ and with local maxima at $ B = 2\pi \ell  $, $ \ell \in \mathbb N $).
Every time the gap closes, the topology of the line bundle corresponding to $ E_1 ( k , B )$ changes with the Chern number decreasing by one. This process is referred to in physics as {\em band inversion}.
That is done by considering an effective Hamiltonian ($ E_j (k, B ) $ are its bands) which provides an approximation in the $ J \to \infty $ limit.

The purpose of this paper is to clarify and extend some of the results of \cite{tri} and our findings can be 
summarised as follows:
\begin{itemize}

\item Construction of an effective Hamiltonian \eqref{eq:effH} for an arbitrary periodic potential $ V $ and 
the analysis of the $ J \to \infty $ approximation -- see Theorem \ref{t:bands}.

\

\item Perturbative analysis of  the gap $ g ( B ) $ in \eqref{eq:defg}, for  the case $ V ( x ) = V_0 ( \cos x_1 + \cos x_2 ) $ as $ V_0 \, e^{ - \frac{B}{4} } \to 0 $ -- see Figure \ref{f:1} and Theorem \ref{t:gap}.

\

\item Proof of $ g ( ( 2 \ell + 1 ) \pi ) = 0 $, $ \ell \in \mathbb N $ for $ V ( x ) = V_0 ( \cos x_1 + \cos x_2 ) $; showing that the 1st and 2nd bands, and 3rd and 4th bands touch at conic (Dirac) points for 
$ \gamma := V_0 e^{- ( 2 \ell + 1 )\pi/4 }$ small; we also show that 
$ \gamma \in \mathbb R \setminus \mathcal A $, where $ \mathcal A  $ is a discrete set, the Dirac cone structure persists  -- 
see Figure \ref{f:2} and Theorem \ref{t:Dirac}.

\

 \item Proof that there is {\em no} gap for bands corresponding to $ E_j ( k , B) $, $ j > 3 $, for small
 $ |V_{0}| e^{-\frac{B}{4} } $. The gap here is understood as a minimum over $ k $, not the gap between the unions over all $ k $ -- see \eqref{bandisolation1} and \eqref{bandisolation2}. We also estimate the number of bands that can be isolated for an arbitrary $ V_0 e^{-\frac{B}{4} } $ -- see Theorem \ref{t:nogap}.

 \

\item Computation of the Chern number, $  -\ell $, of the line bundle of eigenvectors of $ E_1 ( k , 2 \pi \ell ) $, 
 and of the rank 2-bundle, $  -2 \ell $,  of eigenvectors of $ E_j ( k, 2 \pi \ell ) $, $ j = 2, 3 $ (guaranteed to be separated for {\em small} or {\em large} values 
 $ |V_0| e^{ - \frac{\pi\ell}{2}} $ and expected to be separated throughout) 
 -- see Theorem \ref{t:chern}.

 \end{itemize}

We also address the sign difference emphasized in \cite{tri} -- see \S \ref{s:63}.  With the helicity chosen above, the parent Berry curvature is $+B$, in direct agreement with \cite{tri}. 
The Berry connection on the parent bundle in the case of $ V = 0 $ (see \eqref{eq:etap}) does {\em not} descend to a connection on the dual torus $ \mathbb R^2/\Gamma^* $ under ordinary Bloch identification and there is no well defined Chern number. 
In the special case  $B=2\pi\ell$, magnetic translations give an honest descent of the parent line to a bundle $P$ (see 
\eqref{eq:defP}) with $c_1(P)=\ell$, and the parent Berry connection descends to it.  The physical Bloch bundle for the isolated band created by a non-trivial periodic potential, $L $, has the opposite Chern number $c_1(L)=-\ell$ (see Theorem \ref{t:chern}). Thus the parent (defined using magnetic translations) and potential-induced bands have opposite Chern numbers. However, the parent bundles are {\em not} in any global sense limits of the Bloch bundle $L $ as $ V_0 \to 0 $ -- see Figure \ref{f:curv}.

 We now provide precise statements of results. We start with the effective Hamiltonian. For a real-valued 
 $ 2 \pi \mathbb Z $-periodic potential $ W $ we define an operator on $ L^2 ( \mathbb R^2 ) $ with the domain
 given by $ H^2 ( \mathbb R^2 ) $, 
\begin{equation}
\label{eq:effH}  H = \mathcal D^* \mathcal D + W^{\rm{w}} ( x_1 - B D_{x_2} , x_2 ) , \end{equation}
 where we consider $ x_1 $ as a parameter and (for $ u \in \mathscr S ( \mathbb R^2 ) $) define
\[   W^{\rm{w}} ( x_1 - B D_{x_2} , x_2 ) u := \Op^{\rm{w}}
 ( W ( x_1 - B \xi_2, x_2 ) ) . \]
 where we used Weyl quantization, see \cite[Chapter 4]{z12}, 
 \begin{equation}
 \label{eq:Opw}
\Op^{\rm{w}} ( a ) u := a^{\rm{w}} ( x, D ) u := 
 \frac{1}{ (2 \pi)^2 } \int\!\int a ( \tfrac12 ( x + y  ) , \xi ) e^{ i \langle x - y , \xi \rangle } u ( y) dy d \xi  .
\end{equation}
The operator \eqref{eq:effH} is unitarily equivalent to 
\begin{equation}
    \label{eq:effHt} 
    \begin{split}
 H_\theta & := e^{ i \theta B D_{x_1} D_{x_2} }
H  e^{ - i \theta B D_{x_1} D_{x_2} } \\
& = 
\mathcal D^* \mathcal D + W^{\rm{w}} ( x_1 - (1 - \theta ) B D_{x_2}, 
x_2 + \theta B D_{x_1} ) ,
\end{split}
\end{equation}
so that $ \theta = \frac12 $ is the symmetric gauge.

 The bands for $ H $ in \eqref{eq:effH} are defined as the spectrum, 
 \begin{equation}
 \label{eq:bandskB} E_1 ( k , B ) \leq E_2 ( k, B ) \leq \cdots \leq E_n ( k, B ) \to \infty , \ \ n \to \infty , \ \ \ 
 k = k_1 + i k_2 \in \mathbb C , \end{equation}
of $ H( k ) : H^2 ( \mathbb C/\Gamma ) \to L^2 ( \mathbb C/\Gamma )$, where 
\begin{equation} 
 \label{eq:effHk} 
 H ( k ) = H( k , B ) := (\mathcal  D-\bar k )^* (\mathcal D -\bar k ) + W^{\rm{w}} ( x_1 - B ( D_{x_2} -k _2 ) , x_2 ) .
 \end{equation}
The first result relates the bands of $ H_J $ to the bands of $ H $ with a specific $ W $:

\begin{theo}
\label{t:bands}
For any fixed $ E $ and definitions \eqref{eq:defEj}, \eqref{eq:bandskB} with
\begin{equation}
\label{eq:effV}
W = \exp ( - B \mathcal D^* \mathcal D  /4 ) V , 
\end{equation}
we have, for $ E_j ( k , B , J ) < E $, 
\begin{equation}
\label{eq:J2eff}     E_j ( k , B , J ) = E_j ( k , B ) +  \mathcal O ( 1/J ) , \ \  J \to \infty .
\end{equation}
Here $ \exp ( - t \mathcal D^* \mathcal D ) $ is the heat propagator for the Laplacian $ \Delta = -  \mathcal D^* \mathcal D $.
\end{theo}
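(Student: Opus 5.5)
The plan is to remove the coupling term by an operator-valued displacement in the photon variable $w$. Then project onto the photon vacuum and control the remaining off-diagonal terms by a Schur complement (Feshbach) argument, which costs only $\mathcal O(1/J)$.

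\textbf{Step 1: decoupling.} Write $P := \mathcal D - \bar k$ acting on $L^2(\mathbb C/\Gamma)$. On Fourier modes $e^{i\langle n, x\rangle}$ it acts by multiplication by $\zeta_n := n_1 - i n_2 - \bar k$, so $P$ is normal and commutes with $A, A^*$. Since $[A,A^*]=1$, the operator-valued displacement
\[
U := \exp\bigl(\lambda (P A^* - P^* A)\bigr)
\]
is unitary on $L^2(\mathbb C/\Gamma)\otimes L^2(\mathbb R_w)$ and satisfies $U^* A U = A + \lambda P$. Hence
\[
U H_J(k) U^* = P P^* + J A^*A + U V U^* .
\]
Because $P$ is normal, $PP^* = P^*P = (\mathcal D - \bar k)^*(\mathcal D - \bar k)$, which matches the kinetic term of \eqref{eq:effHk}.

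\textbf{Step 2: compute the projected potential.} Let $\Pi$ be the projection onto $\ker A$, i.e.\ the Gaussian ground state $\phi_0(w)$. Expand $V = \sum_n \hat V(n) e^{i\langle n,x\rangle}$. Multiplication by $e^{i\langle n,x\rangle}$ maps mode $m$ to mode $m+n$, and on these modes $U$ acts by the scalar displacements $D(\lambda\zeta_m)$ in $w$. Therefore the matrix element of $\Pi U V U^* \Pi$ between modes $m+n$ and $m$ is
\[
\hat V(n)\,\langle \phi_0, D(\lambda \zeta_{m+n}) D(-\lambda\zeta_m)\phi_0\rangle
= \hat V(n)\, e^{-\lambda^2 |n|^2/2}\, e^{i\lambda^2 \operatorname{Im}(\zeta_{m+n}\bar\zeta_m)} ,
\]
using $\zeta_{m+n} - \zeta_m = n_1 - i n_2$ and the Weyl relation $D(\alpha)D(\beta) = e^{i\operatorname{Im}(\alpha\bar\beta)} D(\alpha+\beta)$.

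Since $\lambda^2/2 = B/4$ and $\mathcal D^*\mathcal D e^{i\langle n, x\rangle} = |n|^2 e^{i\langle n,x\rangle}$, the Gaussian factor gives exactly $\hat W(n)$ with $W = e^{-B\mathcal D^*\mathcal D/4}V$.

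It remains to match the phase $e^{i(B/2)\operatorname{Im}(\zeta_{m+n}\bar\zeta_m)}$, which is bilinear in $n$ and $m$ and contains $k$. The target is $\Op^{\rm w}(W(x_1 - B(\xi_2 - k_2), x_2))$ in the Landau gauge, or its conjugate in the $\theta$-gauge \eqref{eq:effHt}. The plan is to compute the Weyl symbol of $e^{i\langle n, (x_1 - B(D_{x_2}-k_2), x_2)\rangle}$ directly. That symbol is a pure translation in $x_2$ times a phase, and I would compare it with the above, allowing conjugation by a unitary of the form $e^{i\theta B D_{x_1}D_{x_2}}$ times a $k$-dependent multiplication or phase. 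Such conjugations preserve the spectrum.

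I expect this identification of the phase to be the main obstacle: getting the gauge and the $k$-dependence (which enters only through $k_2$ in \eqref{eq:effHk}) exactly right. A fallback is to verify that both operators define the same twisted convolution on $\ell^2(\mathbb Z^2)$ up to a diagonal unitary.

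\textbf{Step 3: perturbation.} Write $\tilde H := U H_J(k) U^*$ in block form with respect to $\Pi \oplus (1-\Pi)$. On $\operatorname{Ran}\Pi$ it is the effective operator $H(k)$ from Step 2, since $A^*A\Pi=0$. On $\operatorname{Ran}(1-\Pi)$ it is at least $J + P P^* - \|V\|_\infty \ge J - C$. The off-diagonal blocks are bounded by $\|V\|_\infty$.

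For eigenvalues $<E$ and $J$ large, the Feshbach map
\[
\Pi \tilde H \Pi - \Pi \tilde H (1-\Pi)\bigl((1-\Pi)\tilde H(1-\Pi) - z\bigr)^{-1}(1-\Pi)\tilde H \Pi
\]
differs from $H(k)$ by an operator of norm $\mathcal O(\|V\|_\infty^2/J)$, uniformly for $z\le E$. Two min–max comparisons then give \eqref{eq:J2eff}:
\begin{itemize}
\item an upper bound, using test functions $\Pi u$ built from eigenvectors of $H(k)$;
\item a lower bound, via the Schur complement inequality.
\end{itemize}
The count of eigenvalues below $E$ is uniform in $k$, because $PP^*$ has compact resolvent on the torus. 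This makes the $\mathcal O(1/J)$ uniform in $k$ and $j$ with $E_j<E$.
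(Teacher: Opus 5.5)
Your proposal is correct. Its skeleton is the paper's: a decoupling unitary, projection onto the oscillator vacuum, and a Schur complement with an $\mathcal O(1/J)$ remainder. The differences are the gauge and the way you conclude.

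The step you flag closes at once. With $p_j:=D_{x_j}-k_j$ one has $\lambda(PA^*-P^*A)=i\sqrt B\,(wp_1-D_wp_2)$. Since $[w,D_w]=i$ is central, Baker--Campbell--Hausdorff gives $\exp(\lambda(PA^*-P^*A))=U_kG^{-1}$ with $G:=e^{\frac{i}{2}Bp_1p_2}$. Here $U_k=e^{i\sqrt B wp_1}e^{-i\sqrt B D_wp_2}$ is the paper's product unitary, which lands directly in the Landau gauge. Hence your projected operator is $GH(k)G^{-1}$, i.e.\ \eqref{eq:effHk} conjugated by a Fourier multiplier commuting with $PP^*$; this is the $k$-shifted symmetric gauge $\theta=\frac12$ of \eqref{eq:effHt}.

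In your twisted-convolution language, your phase is $e^{-\frac{i}{2}B(n_1(m_2-k_2)-n_2(m_1-k_1))}$. It equals the Landau phase $e^{-iBn_1(m_2-k_2)-\frac{i}{2}Bn_1n_2}$ of \eqref{eq:effHk} times $g_{m+n}/g_m$, where $g_m:=e^{\frac{i}{2}B(m_1-k_1)(m_2-k_2)}$. Two slips to fix:
\begin{itemize}
\item The sign of the phase. The relation $D(\alpha)D(\beta)=e^{i\operatorname{Im}(\alpha\bar\beta)}D(\alpha+\beta)$ gives $e^{-i\lambda^2\operatorname{Im}(\zeta_{m+n}\bar\zeta_m)}$, not $e^{+i\lambda^2\operatorname{Im}(\zeta_{m+n}\bar\zeta_m)}$. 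The sign matters: the total phase around a unit plaquette of $\mathbb Z^2$ is $B$ here but $-B$ with your sign, and it is invariant under diagonal conjugation.
\item The direction of conjugation. Since $U^*AU=A+\lambda P$, it is $U^*H_J(k)U$, not $UH_J(k)U^*$, that decouples. This does not change the projected potential, because $\Pi UVU^*\Pi=\Pi U^*VU\Pi$.
\end{itemize}

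For the conclusion, the paper forms the same Schur complement $\mathcal F^J_{-+}(k,z)=H(k)-z+W^J(k,z)$ with $\|W^J\|=\mathcal O(1/J)$. It then covers $\Sigma_E$ by finitely many patches and solves a second, local Grushin problem for $H(k)$ on each, comparing $E^J_{-+,*}(k,z)$ with $zI-\Pi_*(k)H(k)\Pi_*(k)$. Your min--max finish is more elementary and global:
\begin{itemize}
\item Since $\Pi\widetilde H_J(k)\Pi=H(k)$, the trial space $\operatorname{Ran}\Pi$ gives the exact inequality $E_j(k,B,J)\le E_j(k,B)$.
\item Completing the square gives $\langle(\widetilde H_J(k)-z)u,u\rangle\ge\langle\mathcal F^J_{-+}(k,z)\Pi u,\Pi u\rangle$ for real $z\le E$, once $J>\|V\|_\infty+E$.
\item Apply this with $z=E_j(k,B,J)$ on the span of the first $j$ eigenvectors of $\widetilde H_J(k)$, on which $\Pi$ is injective. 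This yields $E_j(k,B)\le E_j(k,B,J)+\|V\|_\infty^2/(J-\|V\|_\infty-E)$.
\end{itemize}
The resulting bound is uniform in $k$ and $j$ with no covering argument, has explicit constants, and shows the bands can only move down.

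In exchange, the paper's route produces a finite-dimensional effective matrix, holomorphic in $z$ and smooth in $k$. That matrix also controls the spectral projections and can be expanded further in $1/J$. This is the kind of information needed to compare eigenbundles, hence Chern numbers, of $H(k)$ and $H_J(k)$. It is also the same Grushin machinery reused in \S\S\ref{s:pert}--\ref{s:Dirac}.
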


The potential considered in \cite{tri} was given by 
 \begin{equation}
\label{eq:V2V0}  V (x ) = V_0 ( \cos x_1 + \cos x_2 ) , 
\end{equation}
so that the potential $ W $ in \eqref{eq:effH} is given by 
\begin{equation}
\label{eq:defW}
W(x ) = V_0 e^{-\frac{B}{4}}  ( \cos x_1 + \cos x_2 ) . 
\end{equation}
In this case we have a perturbative formula for the gap:
\begin{theo}
\label{t:gap}
Suppose that  $ V $ is given by \eqref{eq:V2V0}
 and that $| V_0| e^{ - \frac{B}{4} } \ll 1 $. Then 
with $ g $ in \eqref{eq:defg}, the notation of Theorem \ref{t:bands}, and $ k_0 = ( \frac12, \frac12 ) $,
\begin{equation}
\label{eq:gap}
\begin{split}    g ( B ) & = E_2 ( k_0, B ) - E_1 ( k_0 , B) \\
& = 
 |V_0| e^{  -\frac{B}{4} } ( | |\cos(\tfrac14{B})| - |\sin(\tfrac14{B})||) +  \mathcal O ( V_0^2 e^{ - \frac{B}{2}} ) . \end{split} 
 \end{equation}
 \end{theo}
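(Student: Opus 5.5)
The plan is to use the effective Hamiltonian \eqref{eq:effHk} with $W$ given by \eqref{eq:defW}. Set $\gamma := V_0 e^{-\frac B4}$ and treat $W$ as a perturbation of size $\gamma$ of the free operator $(\mathcal D - \bar k)^*(\mathcal D - \bar k)$. On $L^2(\mathbb C/\Gamma)$ the free operator is diagonalized by $e_n(x) = e^{i\langle x, n\rangle}$, $n \in \mathbb Z^2$, with eigenvalues $|n-k|^2$, up to the sign convention in identifying $k$ with $k_1+ik_2$. For $\gamma=0$ the lowest level is $\min_n |n-k|^2$. It is simple except on the lines $k_j \in \frac12 + \mathbb Z$, and it is fourfold degenerate exactly at $k_0=(\frac12,\frac12)$, where $n\in\{(0,0),(1,0),(0,1),(1,1)\}$ all give $\frac12$.

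\textbf{Step 1: action of $W^{\rm w}$ on the Fourier basis.} Since $x_1$ and $D_{x_2}$ commute, the Weyl quantization of $e^{\pm i(x_1 - B(\xi_2 - k_2))}$ is exactly $e^{\pm i x_1} e^{\mp i B(D_{x_2}-k_2)}$. Hence
\[ \cos(x_1 - B(D_{x_2}-k_2))\, e_n = \tfrac12 e^{-iB(n_2-k_2)} e_{n+(1,0)} + \tfrac12 e^{iB(n_2-k_2)} e_{n-(1,0)}, \]
while $\cos x_2$ shifts $n_2$ by $\pm1$ with coefficient $\frac12$. So $W^{\rm w}$ is $\gamma$ times a magnetic hopping operator on $\mathbb Z^2$. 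Around each unit plaquette the phases multiply to $e^{\pm iB}$, so the flux is $B$.

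\textbf{Step 2: degenerate perturbation theory at $k_0$.} Restricted to the four-dimensional space spanned by $e_{(0,0)}, e_{(1,0)}, e_{(0,1)}, e_{(1,1)}$, the first-order matrix is $\gamma/2$ times the adjacency matrix of a 4-cycle carrying total flux $B$. By a diagonal gauge change its eigenvalues are $\gamma\cos\bigl(\frac{B+2\pi m}{4}\bigr)$, $m=0,\dots,3$, that is, $\pm\gamma\cos\frac B4$ and $\pm\gamma\sin\frac B4$.

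All other free levels at $k_0$ are at distance $\ge 2$ from $\frac12$. A Schur complement (Grushin) reduction therefore gives the four lowest eigenvalues of $H(k_0)$ as $\frac12 + \gamma\mu_m + \mathcal O(\gamma^2)$. Ordering them, the two lowest are $-\gamma\max(|c|,|s|)$ and $-\gamma\min(|c|,|s|)$, with $c=\cos\frac B4$, $s=\sin\frac B4$. This yields $E_2(k_0)-E_1(k_0) = |\gamma|\,\bigl||c|-|s|\bigr| + \mathcal O(\gamma^2)$.

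\textbf{Step 3: the minimum over $k$ (main obstacle).} It remains to show that $\min_k(E_2-E_1)$ is attained at $k_0$, or at least equals the value at $k_0$ up to $\mathcal O(\gamma^2)$. By the symmetries $k\mapsto -k$, $k_1\leftrightarrow k_2$ and the lattice periodicity, it suffices to treat $k$ in a fundamental cell near the degeneracy set. The argument splits into three regions.
\begin{itemize}
\item If $k$ is at distance $\gg|\gamma|$ from both lines $k_j=\frac12$, the free gap is $\gg|\gamma|$, so the perturbed gap is too.
\item If $k$ is within $\mathcal O(|\gamma|)$ of a line but away from $k_0$, a $2\times2$ reduction applies. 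The coupling is through a single cosine with amplitude $\gamma/2$, so the gap is at least $\sqrt{\delta^2 + \gamma^2} - \mathcal O(\gamma^2) \ge |\gamma| - \mathcal O(\gamma^2)$, where $\delta$ is the diagonal detuning. This already exceeds $|\gamma|\,\bigl||c|-|s|\bigr|$.
\item Within $\mathcal O(|\gamma|)$ of $k_0$, write $k=k_0+\gamma\kappa$ and reduce to the $4\times4$ matrix $\gamma\bigl(\operatorname{diag}(\ell_n(\kappa)) + M_B\bigr) + \mathcal O(\gamma^2)$. Here the $\ell_n$ are the linear parts of $|n-k|^2$, namely $\pm(\kappa_1 \pm \kappa_2)$, and $M_B$ is the flux-$B$ cycle matrix from Step 2. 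One must show that the gap between the two lowest eigenvalues of this matrix is minimized at $\kappa=0$.
\end{itemize}
This last assertion is the step I expect to be hardest. I would try to prove it from the explicit characteristic polynomial, which is even in $\kappa$ by symmetry, and check that adding the detuning only increases $\lambda_2-\lambda_1$. The mechanism would be that the diagonal terms anticommute with an appropriate part of $M_B$, so that $\lambda^2$ gains $|\kappa|^2$-type contributions. If exact minimality at $\kappa=0$ fails, I would settle for the weaker statement that the minimum equals the $k_0$ value up to $\mathcal O(\gamma^2)$, which is enough for \eqref{eq:gap}.
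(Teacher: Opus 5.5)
Your Steps 1 and 2 are the paper's proof. It uses the same action of $\cos(x_1-B(D_{x_2}-k_2))$ on $e_m$, the same four-mode Grushin reduction at $k_0$, and the same flux-$B$ cycle matrix with eigenvalues $\pm\gamma\cos\frac B4$, $\pm\gamma\sin\frac B4$. For the minimum over $k$, the paper only carries out the $2\times2$ reductions on the edges $k_2=\frac12$ and $k_1=\frac12$ (gap $|\gamma|+\mathcal O(\gamma^2)$), which is your second bullet. It never examines a neighbourhood of $k_0$, so your Step~3 is more careful than the source.

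As written, however, your third bullet is a genuine gap: the inequality you need is stated, not proved. Your fallback is also not weaker. After the rescaling $k=k_0+\gamma\kappa$, the gap of the first-order $4\times4$ matrix is $|\gamma|G_B(\kappa)$ with $G_B$ independent of $\gamma$. So asking for $\min_k(E_2-E_1)=E_2(k_0)-E_1(k_0)+\mathcal O(\gamma^2)$ means $\min G_B\ge G_B(0)-\mathcal O(|\gamma|)$ for all small $\gamma$, which is exactly $\min G_B=G_B(0)$.

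Your regions also leave a hole. Near the edges, in the zone $|\gamma|\ll|k-k_0|\ll1$, you rely on a two-mode reduction whose error is of order $\gamma^2/|k-k_0|$, not $\gamma^2$, because the other two corner modes are only $\sim|k-k_0|$ away. It is simpler to run the four-mode reduction on a fixed ball $|k-k_0|<\epsilon$. There it is uniformly well posed, with $\mathcal O(\gamma^2)$ errors, since the remaining free levels stay at distance $\ge 2-\mathcal O(\epsilon)$ from $\frac12$.

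The missing inequality is true and elementary. Work in the basis $e_{0,0},e_{1,0},e_{0,1},e_{1,1}$ with $\kappa=k-k_0$ unscaled. The first-order matrix is $(\frac12+|\kappa|^2)I+\operatorname{diag}(p,q,-q,-p)+\frac\gamma2 M$, where $p=\kappa_1+\kappa_2$, $q=\kappa_2-\kappa_1$, and $M$ is your cycle matrix with unimodular entries on the four links. Expanding the determinant over the $4$-cycle, the eigenvalues $\lambda$ of $\operatorname{diag}(p,q,-q,-p)+\frac\gamma2M$ solve
\[
\lambda^4-(2|\kappa|^2+\gamma^2)\lambda^2+(\kappa_1^2-\kappa_2^2)^2+\tfrac14\gamma^4\sin^2\tfrac B2=0 .
\]
This is the formula of \S\ref{s:Diracc}, valid for every $B$ since there $\sin^2\frac\delta2=\cos^2\frac B2$. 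Let $\Lambda_+\ge\Lambda_-\ge0$ be the roots in $\lambda^2$. The two lowest eigenvalues are $-\sqrt{\Lambda_+}\le-\sqrt{\Lambda_-}$, and
\[
(\lambda_2-\lambda_1)^2=2|\kappa|^2+\gamma^2-\sqrt{4(\kappa_1^2-\kappa_2^2)^2+\gamma^4\sin^2\tfrac B2}\ \ge\ \gamma^2\bigl(1-|\sin\tfrac B2|\bigr)=\gamma^2\bigl(|\cos\tfrac B4|-|\sin\tfrac B4|\bigr)^2 .
\]
The inequality holds because $(\kappa_1^2-\kappa_2^2)^2\le|\kappa|^4$ gives $\sqrt{4(\kappa_1^2-\kappa_2^2)^2+\gamma^4\sin^2\frac B2}\le 2|\kappa|^2+\gamma^2|\sin\frac B2|$. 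Thus the model gap is minimal at $\kappa=0$ on the whole ball, not just for $|\kappa|=\mathcal O(|\gamma|)$.

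Combine this with your first two bullets, taken for $|k-k_0|\ge\epsilon$ where the two-mode errors are uniformly $\mathcal O(\gamma^2)$. You then get $g(B)=|\gamma|\,\bigl||\cos\frac B4|-|\sin\frac B4|\bigr|+\mathcal O(\gamma^2)$. Like the paper's argument, this identifies $g(B)$ with $E_2(k_0,B)-E_1(k_0,B)$ only modulo $\mathcal O(\gamma^2)$. It does not show that the minimum is attained exactly at $k_0$.
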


The remarkable accuracy of the approximation \eqref{eq:gap} is illustrated by Figure \ref{f:1} and animation linked there. 
It suggests that at $ B = ( 2 \ell + 1 ) \pi $, $ \ell \in \mathbb N $, the bands touch. That is indeed the case for all 
$ V_0 $ with the bands touching at a Dirac point except possibly a discrete set of values of $ V_0 $ (numerically it seems to happen for all values). That is seen from perturbation theory for small values of $ V_0 $ and the results about persistence of Dirac points by Drouot--Lyman \cite{DrLy26} (see that paper for references to earlier literature, in particular to the work of Fefferman--Weinstein \cite{feffe}).

\begin{theo}
\label{t:Dirac} 
Suppose that $ V $ is given by \eqref{eq:V2V0} and $ H(k ) $ and $ E_j ( k ,B ) $ are given by 
\eqref{eq:effH},  \eqref{eq:bandskB}.  Then for all $ V_0 $, the eigenvalues of
$ H(k_0 ) $, $ k_0 = (\frac12, \frac12) $ with $ B_0 := ( 2 \ell + 1 ) \pi $, $ \ell \in \mathbb N $ have even dimension and in particular,
\[  E_{1+p } ( k_0, B_0 ) = E_{2 + p}  ( k_0 , B_0 )  
\ \ p = 0 , 2 , 
 \ \  k_0 = ( \tfrac12 , \tfrac12 ) . \]
If $ | V_0 | e^{-\frac{B}{4} } \ll 1 $ then there exist positive definite matrices $ A_p ( \alpha) $ such that for $p =0, 2 $,
\begin{equation}
\label{eq:Dirac}  
\begin{split} & E_{ 2 + p } ( k , B_0 ) - 
E_{1+p}  ( k_0, B_0 )  = 
\langle A_p( V_0 e^{ -\frac{B}{4}}) (k-k_0), k-k_0 \rangle^{\frac12} + 
 \mathcal O ( | k - k_0 |^2 )  ,   \end{split} \end{equation} 
Moreover, there exists a discrete set $ \mathcal A \subset \mathbb R $ such that, for $ \alpha := V_0 e^{ -\frac{B}{4} } \notin \mathcal A $, \eqref{eq:Dirac}
for $ p = p_j$, $ j=1,2$, for some $ 1 \leq p_1 +1 < p_2 $.
\end{theo}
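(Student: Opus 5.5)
The plan is to get the even multiplicity from symmetry, and the cone structure from degenerate perturbation theory followed by analytic continuation in $\alpha := V_0 e^{-\frac B4}$.

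\emph{Step 1: even multiplicity.} I would look for two unitary operators $U_1, U_2$ on $L^2(\mathbb C/\Gamma)$ that commute with $H(k_0)$ for $B = B_0$ and satisfy $U_1 U_2 = - U_2 U_1$.

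For $W = \alpha(\cos x_1 + \cos x_2)$ the operator $e^{\pm i(x_1 - B D_{x_2})}$ equals multiplication by $e^{\pm i x_1}$ composed with translation by $\mp B$ in $x_2$. On $2\pi$-periodic functions with $B = (2\ell+1)\pi$, this translation is translation by $\pi$.

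Natural candidates are therefore:
\begin{itemize}
\item translations by $\pi$ in $x_1$ or $x_2$, composed with multiplication by $e^{ix_1}$ or $e^{ix_2}$. These shift $k$ by a lattice vector, and the shift can be undone at $k_0$.
\item the reflection $x \mapsto -x$ combined with multiplication by $e^{i(x_1+x_2)}$. This sends $k_0 \mapsto -k_0 \equiv k_0$ modulo the lattice.
\end{itemize}
I would check that $\cos(x_1 - BD_{x_2})$ and $\cos x_2$ are invariant under these maps. The commutation relation between the two symmetries should then produce a scalar $e^{\pm iB/2}$ or $e^{\pm i B}$-type phase. The choice of $U_1,U_2$ should be arranged so that this phase equals $-1$ exactly when $B \in (2\ell+1)\pi$.

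Once $U_1U_2 = -U_2U_1$ holds, each eigenspace of $H(k_0)$ carries a representation of this Heisenberg-type relation. Such a representation has even dimension, which gives $E_{1+p}(k_0,B_0) = E_{2+p}(k_0,B_0)$ for every $\alpha$.

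\emph{Step 2: small $\alpha$.} For $\alpha = 0$ the spectrum of $H(k)$ is $\{|k + n|^2 : n \in \mathbb Z^2\}$, up to the sign convention for $n$. At $k_0 = (\frac12,\frac12)$ the lowest level $\frac12$ is fourfold degenerate, and the next levels are separated from it by $O(1)$.

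I would use a Grushin problem (Schur complement) onto this four-dimensional space and compute the effective $4\times 4$ matrix
\[
|k+n|^2\delta_{nm} + \alpha\, M_{nm}(B) + \mathcal O(\alpha^2), \qquad n,m \in \{0,-1\}^2 .
\]
Here the entries of $M(B)$ are Fourier coefficients of $W^{\rm w}$, with phases $e^{\pm iB(\cdot)/2}$ coming from the Weyl quantization. At $k_0$ this should reproduce Theorem \ref{t:gap}: the eigenvalues are $\frac12 \pm \alpha|\cos\frac B4 \pm \sin \frac B4|$, up to the normalization used there. At $B_0$ they form two double pairs, split by $O(\alpha)$.

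Expanding in $k - k_0$, the diagonal part contributes linear terms $2\langle k_0+n, k-k_0\rangle$. By Step 1 each $2\times 2$ block is a multiple of the identity plus a traceless Hermitian part that is linear in $k-k_0$. I would compute the $2\times 2$ reduced matrix $\sigma\cdot L_p(k-k_0)$ and check that $L_p : \mathbb R^2 \to \mathbb R^3$ has rank two. Then $A_p = 4 L_p^T L_p$ up to normalization is positive definite, and the stability of the Schur complement gives \eqref{eq:Dirac} with the $\mathcal O(|k-k_0|^2)$ remainder.

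\emph{Step 3: generic $\alpha$.} For arbitrary $\alpha$ I would follow Drouot--Lyman \cite{DrLy26}. Fix a doubly degenerate eigenvalue of $H(k_0)$ that is exactly twofold. The symmetry of Step 1 makes it persist as an analytic family in $\alpha$ away from the discrete set where it meets another level. The Dirac cone condition, namely that the $2\times2$ linear map $L_p(\alpha)$ is nondegenerate, is the nonvanishing of an analytic function of $\alpha$, for instance $\det(L_p^T L_p)$. By Step 2 this function is nonzero for small $\alpha$, so it vanishes only on a discrete set.

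Collecting the level-crossing values and the zeros into $\mathcal A$ yields pairs $p_1, p_2$ as claimed. The indices of the pairs may change as levels cross, which is why the statement allows general $p_j$.

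\emph{Main obstacle.} I expect the hardest part to be Step 1: finding the correct pair of symmetries and verifying the sign $-1$ of the commutator, since the phases from the Weyl quantization and from the $k$-shift have to be tracked carefully. In Step 3 the subtle point is ensuring that the analytic continuation keeps exact twofold degeneracy, as opposed to fourfold, except on a discrete set.
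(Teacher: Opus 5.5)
There is a genuine gap in Step 1, and Steps 2 and 3 rest on it. Your overall architecture matches the paper's: two anticommuting unitary symmetries, a Grushin reduction at small coupling whose linear term is constrained by those symmetries, and Drouot--Lyman analytic continuation with a nondegeneracy function that is analytic and not identically zero.

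Your first family of candidates cannot supply the second symmetry. Suppose $Uu(x)=e^{i\langle n,x\rangle}u(x+a)$. Then $U^{-1}H(k_0)U$ has kinetic part $|D-k_0+n|^2$. Commuting with $H(k_0)$ therefore forces $n=0$, since the difference $2\langle n,D-k_0\rangle+|n|^2$ is unbounded. The pure half-period shifts $a\in\{(\pi,0),(0,\pi),(\pi,\pi)\}$ then change the sign of the first potential term, of $\cos x_2$, or of both. Composing with your second candidate does not help, for the same reasons.

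That second candidate, $S_1u=e^{i(x_1+x_2)}u(-x)$, is exactly the paper's $S_1$ and is a symmetry for every $B$. On its own, however, it gives no anticommuting pair.

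The missing idea is a \emph{partial} reflection. The paper takes $S_0u(x)=e^{ix_2}u(x_1+\pi,-x_2)$. At $k_0$ and $B_0=(2\ell+1)\pi$, the first potential term equals $-(-1)^\ell\sin x_1\,\tau$, where $\tau u(x)=u(x_1,x_2+\pi)$. The map $u\mapsto e^{ix_2}u(x_1,-x_2)$ preserves $(D_{x_2}-\tfrac12)^2$ and $\cos x_2$ but anticommutes with $\tau$; this is exactly where $e^{iB_0}=-1$ is used. The shift by $\pi$ in $x_1$ compensates by flipping $\sin x_1$.

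One then has $S_0^2=S_1^2=I$ and $S_1S_0=-S_0S_1$. The sign is $e^{i\pi}=-1$, coming from the $x_1$-shift in $S_0$ against the $x_1$-reflection in $S_1$. It does not depend on $B$ at all, contrary to your expectation of a $B$-dependent phase.

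Without $S_0$, Steps 2 and 3 are not justified either.
\begin{itemize}
\item The $4\times 4$ reduction gives the double degeneracy at $k_0$ only to first order in the coupling, so exact touching needs the symmetry.
\item The structure of the reduced linear term uses both symmetries. In a basis with $S_0=\sigma_3$ and $S_1=\sigma_1$, $\partial_{k_1}H$ reduces to a multiple of $\sigma_3$ and $\partial_{k_2}H$ to a combination of $\sigma_2$ and $\sigma_3$. Nondegeneracy of the cone then amounts to two scalar coefficients being nonzero.
\item Only the vanishing of the trace part follows from $S_1$ alone, via $S_1H(k)S_1^{-1}=H((1,1)-k)$.
\end{itemize}

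Two smaller points. First, the first-order eigenvalues at $k_0$ are $\tfrac12\pm\alpha\cos\tfrac B4$ and $\tfrac12\pm\alpha\sin\tfrac B4$, which give $\tfrac12\pm\alpha/\sqrt2$, each double, at $B_0$. Your formula $\tfrac12\pm\alpha|\cos\tfrac B4\pm\sin\tfrac B4|$ would instead give $\tfrac12$ twice and $\tfrac12\pm\sqrt2\,\alpha$, contradicting your next sentence. Second, in Step 3 the rank-two analytic projections must exist for all real $\alpha$, as \cite[Proposition 2.1]{DrLy26} provides, and not only away from crossings. The identity theorem has to be applied on a connected set.
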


We believe that we can take $ p_1 = 0 $ and $ p_2 = 2$ in the second part of the theorem but, as in \cite{feffe} and \cite{DrLy26}, the general arguments cannot exclude the possibility of the cone "going up". In forthcoming work we plan to show \eqref{eq:Dirac} when $| V_0 | \gg 1 $ (the semiclassical limit). 

We show that the isolation conditions in Theorem 4, at least for small values of $  V_0 e^{-\frac{B}{4} }  $ (but most likely for all values) cannot hold except for the first three eigenvalues:

\begin{theo}
\label{t:nogap}
There exist $ c_0 , c_1 > 0 $
such that if $M\geq c_{0}(1+V_{0}^{2} e^{-\pi\ell})$,
then for any finite set $ \mathcal S \subset \mathbb N \setminus [1, M ] $,
\begin{equation}\label{bandisolation2}
    \inf_{k\in\mathbb{R}^{2}/\mathbb{Z}^{2}}\mathrm{dist}\Big(\{E_j (k, 2 \pi \ell ) :
  j\in    \mathcal S  \},\{E_j (k, 2 \pi \ell ) :  j \notin \mathcal S \}\Big) = 0. 
\end{equation}
Moreover, when $ |V_0 | e^{ -\frac{\pi\ell}{2} } < c_1 $ then \eqref{bandisolation2} holds for every finite $ S \subset \mathbb N \setminus \{ 1, 2, 3 \}. $
\end{theo}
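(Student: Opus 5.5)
The plan is to reduce the problem at $B=2\pi\ell$ to an ordinary, separable periodic Schr\"odinger operator, and then produce band crossings explicitly from one-dimensional Mathieu band functions.

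\textbf{Reduction.} With $W$ as in \eqref{eq:defW}, set $\alpha := V_0 e^{-\pi\ell/2}$. The first step is to compute the action of $W^{\rm w}(x_1 - B(D_{x_2}-k_2), x_2)$ on the Fourier basis $e^{i\langle n, x\rangle}$, $n \in \mathbb Z^2$. The term $\cos x_2$ is a multiplication operator. The term $\cos(x_1 - B(\xi_2 - k_2))$ has Weyl quantization $\tfrac12\big(e^{ix_1} e^{-iB(D_{x_2}-k_2)} + \text{c.c.}\big)$; the Weyl symmetrization contributes no extra factor, because $x_1$ and $D_{x_2}$ commute. On the periodic basis, $D_{x_2}$ has integer eigenvalues and $B=2\pi\ell$, so $e^{-iB(D_{x_2}-k_2)} = e^{2\pi i \ell k_2}$ is a scalar. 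Hence
\[ H(k, 2\pi\ell) = (\mathcal D-\bar k)^*(\mathcal D-\bar k) + \alpha\big(\cos(x_1 + 2\pi\ell k_2) + \cos x_2\big). \]
Conjugating by the translation $x_1 \mapsto x_1 - 2\pi\ell k_2$, which commutes with the kinetic term, shows that $H(k,2\pi\ell)$ is unitarily equivalent to the standard Bloch operator of $-\Delta + \alpha(\cos x_1 + \cos x_2)$ at quasimomentum $k$. Consequently $E_j(k,2\pi\ell)$ are the ordinary Bloch bands of this separable potential.

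\textbf{Separation of variables.} Let $\mu_1(t) \le \mu_2(t) \le \cdots$ be the Floquet bands of the Mathieu operator $D_s^2 + \alpha\cos s$ on $\mathbb R/2\pi\mathbb Z$ with quasimomentum $t$. Then
\[ \{E_j(k,2\pi\ell)\}_j = \{\mu_a(k_1) + \mu_b(k_2)\}_{a,b \ge 1}, \]
counted with multiplicity. The statement \eqref{bandisolation2} now says: for every finite $\mathcal S$ with $\min\mathcal S > M$, there is some $k$ at which a level $\mu_a(k_1)+\mu_b(k_2)$ carrying an index in $\mathcal S$ coincides with one carrying an index outside $\mathcal S$. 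It suffices to do this at the lower edge. Let $j_0 = \min \mathcal S$. I will find $k$ with $E_{j_0-1}(k) = E_{j_0}(k)$; at that $k$ the two indices $j_0-1 \notin \mathcal S$ and $j_0 \in \mathcal S$ carry the same value, so the distance in \eqref{bandisolation2} vanishes.

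\textbf{Producing a crossing.} I will use the Mathieu facts that the bands $\mu_a$ are monotone on $[0,\tfrac12]$, even in $t$, with range of width $\gtrsim a$, and that the gaps between consecutive bands $\mu_a,\mu_{a+1}$ are $\mathcal O(|\alpha|^{a}/(a!)^2)$, while all $\mu_a$ lie within $|\alpha|$ of free values. The idea is to view $E_{j_0-1}$ and $E_{j_0}$ as two pieces of the continuous, sorted family $\mu_a(k_1)+\mu_b(k_2)$, and to move $k$ along a path over which the sorted order of two pairs $(a,b)$, $(a',b')$ with $a \ne a'$ (and $b\neq b'$) swaps. Since the sum is separable, the two branches are independent analytic functions and genuinely cross: this is an exact equality, not an avoided crossing, because distinct $(a,b)$ give orthogonal product eigenfunctions. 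The hypothesis $M \ge c_0(1+\alpha^2)$ enters here. It ensures that near energy $E_{j_0}$ there are two lattice-type levels $|n+k|^2$ with different $n_1$, each perturbed by at most $\mathcal O(|\alpha|)$. Their free difference sweeps through an interval of length $\gtrsim \sqrt{E} \gg |\alpha|$ as $k$ ranges over the torus, and the intermediate value theorem then gives an exact crossing of consecutive levels at every index above $M$. The case analysis is needed so that no third level slips in between, which would change which index crosses; it is controlled by counting how many sums lie in windows of size $\mathcal O(1+|\alpha|)$.

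\textbf{Small $\alpha$, indices above $3$.} For $|\alpha| < c_1$, I will argue directly from perturbation theory around the free bands $|n+k|^2$, using the explicit Mathieu expansions to first and second order. Among the low levels, only $E_1$ (near $n=0$) and the pair $E_2,E_3$ (from $n = -(1,0),-(0,1)$, which are separated from the rest near $k=(\tfrac12,\tfrac12)$) can stay away from the other levels. For every $j_0 \ge 4$ I will exhibit a crossing by the same separable intermediate-value argument, now using that for small $\alpha$ every Mathieu gap is $\mathcal O(|\alpha|)$ while free crossings between different $(a,b)$ pairs occur at each index $\ge 3$.

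\textbf{Main obstacle.} The hardest part will be the bookkeeping showing that the exact crossing lands at the consecutive pair $(j_0-1, j_0)$, rather than between other indices, uniformly in $j_0$. I would first try to settle this by working on the diagonal $k_1 = k_2$, where the symmetry $(a,b) \leftrightarrow (b,a)$ forces exact degeneracies, and then perturbing off the diagonal.
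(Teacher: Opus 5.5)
There is a genuine gap, and it sits in the step you flag as the main obstacle. Your reduction needs $E_{j_0-1}(k)=E_{j_0}(k)$ for some $k$, for every admissible $j_0$.

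In the small-coupling statement this fails at $j_0=4$. For $0<|\alpha|\ll 1$ the pair $E_2,E_3$ is an isolated cluster; this is \eqref{bandisolation1}, and you say so yourself. Hence $E_3<E_4$ on the whole torus, and no $k$ with $E_3(k)=E_4(k)$ exists. For $\mathcal S=\{4\}$ or $\mathcal S=\{4,\dots,j_1\}$, the distance in \eqref{bandisolation2} can only vanish at an upper or interior edge, which your scheme never examines.

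For large $j_0$ the lower-edge claim happens to be true, but your mechanism does not prove it. The intermediate value theorem applied to $\mu_a(k_1)+\mu_b(k_2)-\mu_{a'}(k_1)-\mu_{b'}(k_2)$ gives an exact crossing of two \emph{labelled} levels at an uncontrolled point $k^*$. It says nothing about which sorted positions those levels occupy there. Counting levels in windows of width $\mathcal O(1+|\alpha|)$ cannot rule out intermediate levels either. At every energy such a window contains on average about $\pi$ times its width many levels $|n+k|^2$, not two.

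The missing idea is to stop tracking sorted indices and argue with labels $(m,n)$, where $E^\gamma_{m,n}(k)=\epsilon^\gamma_m(k_1)+\epsilon^\gamma_n(k_2)$.
\begin{itemize}
\item Suppose a finite sorted cluster were isolated. The set of labels whose levels lie in it is locally constant in $k$, by continuity of the $E^\gamma_{m,n}$ and of the sorted $E_j$. It is therefore constant on the connected torus.
\item That set is then closed under crossings: a label inside can never equal a label outside at any $k$.
\item It remains to show that every label with $m+n$ large lies on an infinite chain of pairwise crossing labels. This uses free sign changes that survive the bound $\sup_\kappa|\epsilon^\gamma_j-\epsilon^0_j|\le|\gamma|$.
\item First, $E^0_{m,n}-E^0_{m-1,n+1}=d_{m-1}(k_1)-d_n(k_2)$, where $d_j:=\epsilon^0_{j+1}-\epsilon^0_j$ takes the values $j$ and $0$ at $\kappa\in\{0,\tfrac12\}$, in an order fixed by the parity of $j$. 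So this difference equals $m-1$ at one point of $\{0,\tfrac12\}^2$ and $-n$ at another, which links all of $A_r$.
\item Second, $E^0_{1,r-1}-E^0_{3,r-2}=d_{r-2}(k_2)-1$ on $k_1=0$ takes the values $r-3$ and $-1$. This links $A_r$ to $A_{r+1}$ for $r\ge 4$.
\item For $|\gamma|<\tfrac14$, a finite closed label set must therefore lie in $A_2\cup A_3$.
\item For general $\gamma$, the same reasoning confines it to the $\mathcal O(1+\gamma^2)$ labels with $m+n\lesssim 1+|\gamma|$. Here the first sign change needs $\min(m-1,n)\gtrsim |\gamma|$, and labels with one small entry need a separate comparison.
\end{itemize}
This handles every edge of $\mathcal S$ at once and gives both parts of the theorem. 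Applied to $\{1,\dots,j_0-1\}$, it also recovers your lower-edge claim for $j_0>c_0(1+\gamma^2)$.
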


\noindent 
{\bf Remark.} The proof gives $ c_0 $ and $ c_1 $ but we did not attempt to find  optimal constants.

The next result confirms the calculation of the Chern number for the first band from \cite{tri} and provides
a calculation of the Chern number of the vector bundle associated to the next two bands (assuming that they 
are isolated):
\begin{theo}
\label{t:chern}
Suppose that $ g ( 2 \pi \ell ) > 0 $ (which, thanks to \eqref{eq:gap}, holds for small or large $ |V_0| e^{ - \frac{\pi \ell}{2} }  > 0 $). 
Then the Chern number of the line bundle $ L $ associated to the first band (see \cite[(9.2)]{notes}) is given 
by $ c_1 ( L ) = -\ell $. 

Assuming that for $ E_j ( k ) := E_j ( k , 2 \pi \ell ) $, 
\begin{equation}\label{bandisolation1}
    \inf_{k\in\mathbb{R}^{2}/\mathbb{Z}^{2}}\mathrm{dist}\Big(\{E_j (k) :  j = 2,3 \},\{E_j (k) :  j \notin \{ 2,3 \} \}\Big)>0. 
\end{equation}
(valid for $ |V_0 e^{-\frac{\pi \ell}{2} }| > 0  $ small or large), the Chern number of the corresponding rank-2 vector bundle $ V $, 
is given by $ c_1 ( V ) =  -2 \ell $.
\end{theo}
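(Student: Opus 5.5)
The plan is to compute both Chern numbers by deformation. The spectral-gap assumption lets us continuously deform the potential without changing the topology, and we deform to the perturbative regime $|V_0|e^{-\pi\ell/2}\ll 1$. There the bundles can be written down explicitly from the Landau level structure. Since $c_1$ is a homotopy invariant of a continuous family of vector bundles over $\mathbb R^2/\mathbb Z^2$, it suffices to check two things: that the gap assumption (respectively \eqref{bandisolation1}) is preserved along a path in $\alpha = V_0 e^{-B/4}$ reaching small $\alpha$, and that the answer is $-\ell$ (respectively $-2\ell$) there. Equivalently, we may simply take the statement at small $\alpha$ and at large $\alpha$, where Theorem \ref{t:gap} and the semiclassical picture guarantee the gaps.

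\textbf{Step 1: the unperturbed structure at $B=2\pi\ell$.} When $W=0$, the operator $H(k)$ of \eqref{eq:effHk} is a free Laplacian $|\mathcal D - \bar k|^2$ on the torus. Its eigenfunctions are $e^{i\langle x, m\rangle}$, $m\in\mathbb Z^2$, with energies $|m-k|^2$. Near $k=0$ (and then across the Brillouin torus) the lowest band is built from the plane waves closest to $k$.

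The coupling enters through $W^{\rm w}(x_1-B(D_{x_2}-k_2),x_2)$. Its matrix elements between plane waves carry phases $e^{\pm i B(\cdots)/2}$ coming from the Weyl quantization. At $B=2\pi\ell$ these phases make the Bloch bundle of the perturbed first band differ from the trivial one. Concretely, the operator $e^{\pm i x_1}$ shifted by $B D_{x_2}$ acts as a magnetic translation, and the cocycle $e^{iB m_1 m_2'}$-type factors wind $\ell$ times.

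\textbf{Step 2: perturbation theory in $\alpha$.} The lowest band for small $\alpha$ is, for generic $k$, a single plane wave, perturbed by $\mathcal O(\alpha)$. The bundle is nontrivial only because of the degeneracy points. The gap closings of the free problem occur at $k$ with $|m-k|=|m'-k|$, the relevant one being $k_0=(\frac12,\frac12)$, where four plane waves meet. There the first-order effective $4\times 4$ matrix, coming from the $\cos x_1+\cos x_2$ hoppings, has entries $\pm\alpha e^{\pm iB/4}$; this is the source of the $|\cos\frac B4|-|\sin\frac B4|$ in \eqref{eq:gap}.

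Gluing the trivialization by plane waves away from $k_0$ with the ground state of this $4\times4$ matrix near $k_0$, we compute $c_1(L)$ as a winding number. It is the degree of the transition function on a small circle around $k_0$ (plus contributions from other degeneracy points if any). We then track the phases $e^{iBm/4}$ with $B=2\pi\ell$ to get the winding $-\ell$. The same computation with the middle two eigenvectors of the $4\times4$ matrix gives the rank-2 bundle, whose determinant line winds $-2\ell$.

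Alternatively, and more robustly, we integrate the Berry curvature using the Kubo/TKNN formula \begin{equation*} c_1 = \frac{i}{2\pi}\int \mathrm{tr}\, P\,[\partial_{k_1}P,\partial_{k_2}P]\,dk, \end{equation*} with $P$ the band projection. We would expand $P$ in $\alpha$ and check that it localizes to the nearly degenerate points.

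\textbf{Main obstacle.} The delicate point is the sign and the global bookkeeping. The curvature near $k_0$ is concentrated in an $\mathcal O(\alpha)$ neighbourhood and integrates to a local contribution, while the $k$-dependence of the operator $W^{\rm w}(x_1-B(D_{x_2}-k_2),x_2)$ itself contributes a background term. Here the $k_2$-shift acts as an $x_1$-translation by $Bk_2$, which is nontrivial because it is not a gauge transformation periodic in $k$. This is the same issue as the parent-versus-Bloch sign discussed in \S\ref{s:63}.

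To handle it, I would first conjugate by the unitary $e^{-iBk_2 D_{x_1}}$, which makes the operator family take the form $(\mathcal D-\bar k)^*(\mathcal D-\bar k)$ with $W$ fixed up to translation. I would then compute the transition function of the resulting family under $k\mapsto k+e_2$. This function is the multiplication $e^{-iBD_{x_1}}$, i.e. translation by $2\pi\ell$ in $x_1$, combined with the periodicity in $k$. From it, $c_1$ can be read off as a degree computation on the boundary of the fundamental domain.

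The small-$\alpha$ analysis is needed only to identify which plane wave is the ground state in each region. The outcome should be $-\ell$ per Landau-like band, and hence $-2\ell$ for the two-band cluster.
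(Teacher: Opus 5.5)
There is a genuine gap. Your proposal never uses the fact that makes the theorem elementary, and the route you do outline would not produce the answer.

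At $B=2\pi\ell$ the operator $e^{-iBD_{x_2}}$ is the identity on $2\pi$-periodic functions. Hence $W^{\rm w}(x_1-B(D_{x_2}-k_2),x_2)=\gamma(\cos(x_1+2\pi\ell k_2)+\cos x_2)$ exactly, and $H(k)$ separates. For \emph{every} $V_0$ the first-band eigenvector is $w(k,x)=u(k_1,x_1+2\pi\ell k_2)\,u(k_2,x_2)$, where $u$ is the 1D Bloch eigenfunction normalized by $u(\kappa+p,t)=e^{ipt}u(\kappa,t)$. The paper then evaluates $c_1(L)=\frac{i}{2\pi}\int_{\partial F}\eta$ directly. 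Only the term $2\pi\ell\,\partial_t u$ in $\partial_{k_2}w$ contributes, the edges $k_1=0$ and $k_1=1$ are compared via $u(1,t)=e^{it}u(0,t)$, and one gets $-\ell$. The same computation for the product eigenvectors $w_{1,2},w_{2,1}$ gives $-\ell$ each, hence $-2\ell$; these split the rank-2 bundle into two line bundles even though $E_{1,2}$ and $E_{2,1}$ cross. No deformation and no perturbation theory are needed. Your deformation step, by contrast, is not justified. The hypotheses hold only at the given $V_0$, and proving the result for ``small and large $\alpha$'' is not the same as proving the theorem; you also never treat large $\alpha$. For the rank-2 cluster you have no isolation along any path.

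The perturbative picture is also wrong in a concrete way.

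The free lowest band is degenerate along the whole lines $k_1\equiv\frac12$ and $k_2\equiv\frac12$ (the three cases in \S\ref{s:pert}), not just at $k_0$. So there is no plane-wave trivialization ``away from $k_0$'' to glue to a local $4\times4$ model. In fact, from the explicit $w$ and Hellmann--Feynman, the curvature function of \eqref{eq:defBk} equals $-2\pi\ell\bigl(1-\tfrac12\partial_\kappa^2\varepsilon(k_1,\gamma)\bigr)$. As $\gamma\to0$ this concentrates on the line $k_1\equiv\frac12$, where the coupling $\langle\gamma\cos(x_1+2\pi\ell k_2)e_{0,0},e_{1,0}\rangle=\frac{\gamma}{2}e^{2\pi i\ell k_2}$ winds $\ell$ times in $k_2$. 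The phases you propose to track, those of the $4\times4$ matrix at $k_0$, are $1$ and $e^{\pm iB/2}=(-1)^\ell$; they see $\ell$ only mod $2$ and cannot account for $-\ell$.

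Your fallback, conjugating by the $x_1$-translation by $2\pi\ell k_2$, is the right move, but you put the twist in the wrong direction. Under $k\mapsto k+e_2$ the extra factor is translation by $2\pi\ell$ in $x_1$, which is the identity on $L^2(\mathbb R^2/\Gamma)$ and carries no winding. The twist occurs under $k\mapsto k+e_1$: translating by $s$ pulls out a factor $e^{is}$ from $e^{ix_1}$, so $w(k+n)=e^{2\pi i\ell n_1k_2}e^{i\langle x,n\rangle}w(k)$. The winding of this transition function along the edge gives $c_1(L)=-\ell$ with the sign convention of \eqref{eq:defBk}. Carried out with $w_0(k)=u(k_1,x_1)u(k_2,x_2)$ as the untwisted global section, this is precisely the paper's computation. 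As written, your sign $-\ell$ is asserted rather than derived.
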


At the moment the detailed analysis is provided in the case of 
$ |V_0| e^{-\frac{\pi\ell}{2} } $ small. The semiclassical case $ |V_0| \to \infty$
will be addressed in detail later. The conclusions about large values of $ |V_0| e^{-\frac{\pi\ell}{2} }$, $ B = 2 \pi \ell $, in Theorem \ref{t:chern} follow from 
the semiclassical analysis of the one dimensional problem $ -h^2 \partial_x^2 + \gamma_0 \cos x $ found in \cite[\S 6.3]{notes}. It implies that for $ |V_0 |e^{ - \frac{\pi\ell}{2}} $ large 
(which is the value of $ h^{-\frac12} $ after rescaling), the first band at $ B = 2 \pi \ell $ is separated from other bands. The same goes for 
the union of 2nd and 3rd band, and so on (governed by the multiplicities of the eigenvalues of the $2D $ harmonic oscillator), as long as 
$ E_m \leq C h^{-1} = C |V_0|^2 e^{-\pi\ell } $ (this is consistent with Theorem \ref{t:nogap}).

\smallsection{Notation} We use the \emph{mathematics}
inner product convention on $L^2 $, $
\langle u,v\rangle:=\int \! u\,\overline v$
which is also the convention used in \cite[(8.20), \S\S 9.1, 9.4]{notes} which compares to the bra-ket notation (on the right) as follows
$ \langle u, v \rangle = \langle v | u \rangle $. The notation $ f = \mathcal O_V ( g)$ or $ A = \mathcal O_{V \to 
W } ( g ) $ means that $ \| f \|_V \leq C |g| $ and $ \| A \|_{V\to W } \leq C| g|$, respectively. 

\smallsection{Acknowledgements}
We acknowledge the support by the Simons Targeted Grant 896630 ``Moir\'e Materials Magic". We are also grateful to Trithep Devakul for 
introducing us to this subject, for translating the notation of \cite{tri} to the PDE notation \eqref{eq:Ham}, and for subsequent discussions. We would like to thank Alexis Drouot for a helpful discussion of the results of \cite{feffe} and \cite{DrLy26}. We also acknowledge further help from ChatGPT in translating physics concepts into mathematical language in \S \ref{s:63} and for the help in proofreading.


\section{Effective Hamiltonian in the $ J \to \infty $ limit.}
\label{s:effH}

In this section we derive the formula for the effective Hamiltonian and prove Theorem \ref{t:bands}.
We start with a unitary transformation which is implicit in \cite[\S III]{tri}. It is based on the observation 
(we denote by $ w $ the operator $ f ( w ) \mapsto w f ( w ) $) 
\[   
D_w - a = e^{  i a w } D_w e^{- i a w }, \ \    w - b = e^{ - i b D_w } w e^{  i b D_w } , 
 \ \ \ e^{ i b D_w } f ( w) = f ( w + b ) , 
 \]
so that 
\[
D_w - i w - ( a + i b ) = e^{i a w } e^{  i b D_w } ( D_w - i w ) e^{ -i b D_w } e^{ - i a w } . 
\]
Putting $ B := 2 \lambda^2 $, 
\[ 
A - \lambda ( \mathcal D -\bar k  ) = \tfrac{1}{\sqrt 2 } ( D_w - i w - \sqrt B ( D_{x_1}  - i D_{x_2} -\bar k  ) ) ,
\]
and motivated by this we define the following family of unitary operators:
\[ 
U_k  := e^{ i \sqrt B  w (D_{x_1}  -k _1 ) } e^{ -i \sqrt B D_w ( D_{x_2} -k _2  )  }  : L^2 ( \mathbb R^2 /\Gamma \times \mathbb R ) \to L^2 ( \mathbb R^2/\Gamma  \times \mathbb R ) ,
\]
$  k = k_1 + i k_2 \in \mathbb C $.  Then
\[  
A - \lambda ( \mathcal D -\bar k  )  = 
\tfrac 1 {\sqrt 2 } U_k  ( D_w - iw ) U_k^*  .  
\]
and 
\[  
\begin{split} 
& U^*_k (| \mathcal D-\bar k|^2 + J ( A - \lambda (\mathcal D-\bar k) )^* ( A - \lambda( \mathcal D-\bar k ) )) U_k  = \\
&  \ \ \ \ \ \ \ \ \ \ 
(\mathcal D -\bar k )^* ( \mathcal D -\bar k  )  + \tfrac12 J ( D_w^2 + w^2 - 1 ) . \end{split} 
\]

\noindent
{\bf Remark.} We can also use a more symmetric unitary operator
described in terms of Weyl quantisation \cite[\S 4.2.4]{z12}:
\begin{equation}
\label{eq:defUkt}
\widetilde U_k
  =
  {\rm{Op}}^{\rm{w}} \left(e^{i\ell_k}\right) = e^{-\frac{i B}{2}k_1k_2} U_k , 
\end{equation}
where, in the notation of \eqref{eq:Opw}, 
\begin{equation*}
  \ell_k(x,w;\xi,\eta)
  =  \sqrt B\left[w(\xi_1-k_1)-\eta(\xi_2-k_2)\right]
  +\tfrac12{B}\left(\xi_1\xi_2-k_2\xi_1-k_1\xi_2\right) .
 \end{equation*}

Denoting by $ V $ the operator $ u ( x , w )  \mapsto V  ( x ) u ( x, w ) $ on 
$ L^2 ( \mathbb R^2 / \Gamma \times  \mathbb R ) $, we now obtain 
\begin{equation}
\label{eq:HamU} \begin{split} U_k^* H_J ( k ) U_k & = \widetilde H_J ( k ) \\
& := (\mathcal D -\bar k )^* ( \mathcal D -\bar k  )  + \tfrac12 J ( D_w^2 + w^2 - 1 ) + 
U_k^* V U_k  . \end{split} \end{equation}
We analyse
$  U_k^* V U_k   $ using the Fourier expansion: $ V(x)=\sum_{n\in\mathbb{Z}^2}\widehat{V}(n)e^{i ( n_1 x_1 + n_2 x_2 )}$, $ \widehat V ( n ) = 
\mathcal O ( \langle n \rangle^{-\infty } ) $, 
where the rapid decay of coefficients follows from the smoothness assumption. 
Then
\begin{align*}
U^{*}_kVU_k &=\sum_{n\in\mathbb{Z}^2}\widehat{V}(n)e^{i n_1 (x_1-\sqrt{B}w-B( D_{x_2}-k_2 ) )}e^{in_2 (x_2+\sqrt{B}D_{w})}.
\end{align*} 
We will define the effective Hamiltonian by projecting  $ \widetilde H_J ( k ) $ to the ground state of 
the harmonic oscillator,  $\psi_{0}(w):=\pi^{-\frac{1}{4}}e^{-\frac{w^{2}}{2}}$:
\[ 
H ( k ) := (D_{x_{1}}-k_{1})^{2}+(D_{x_{2}}-k_{2})^{2}+ \mathcal W (k),\] where
\begin{align*}
\mathcal W (k)&:=\Big\langle \sum_{n\in\mathbb{Z}^2}\widehat{V}(n)e^{i n_1 (x_1-\sqrt{B}w-B(D_{x_2}-k_{2}))}\,e^{in_2(x_2 +\sqrt{B}D_{w})}\psi_{0} , \psi_{0}\Big\rangle_{L^{2}(\mathbb{R}_{w})}
\\
&= \sum_{n\in\mathbb{Z}^2}\widehat{V}(n)e^{i n_1 (x_1 -B(D_{x_2}-k_{2}))}e^{in_2 x_2 }\Big\langle e^{-i n_1\sqrt{B}w}e^{i n_2 \sqrt{B}D_{w}}\psi_{0},\psi_{0}\Big\rangle_{L^{2}(\mathbb{R}_{w})} 
\\
&=\pi^{-\frac{1}{2}}\sum_{n\in\mathbb{Z}^2}\widehat{V}(n)\bigg(\int_{\mathbb{R}}e^{-\frac12 {w^{2}}}e^{-i n_1 \sqrt{B}w}e^{-\frac12{(w+ n_2 \sqrt{B})^{2}}}\mathrm{d}w\bigg)e^{i n_1 (x_{1}-B(D_{x_2}-k_{2}))}e^{in_2 x_2 }
\\
&=\sum_{n\in\mathbb{Z}^2 }\widehat{V}(n)\,e^{-\frac14 B |n|^2 }e^{\frac12 {iBn_1 n_2 } }\,e^{in_1(x_{1}-B(D_{x_2}-k_{2}))}e^{in_2 x_2 }.
\end{align*}
We let  $\xi :=(\xi_1 ,\xi_2 ),$ $x^{*}:=n ,$ $\xi^{*}:=(0, -Bn_1 )$ and define the linear symbol
\[
\ell_{n}(x,\xi):=\langle x^{*}, x\rangle+\langle\xi^{*},\xi \rangle=n_1 x_1 +n_2  x_2 -B n_1 \xi_2.
\]
We have 
\[ 
\begin{split} \Op^{\mathrm{w}}(e^{i\ell_{n}}) & =e^{i\langle x^{*}, x\rangle}e^{\frac{i}{2}\langle x^{*},\xi^{*}\rangle}e^{i\langle\xi^{*}, D_{x}\rangle}=e^{-\frac{i}{2}B n_1 n_2 }e^{i n_1 x_1 }e^{in_2 x_2 }e^{-in_{1}BD_{x_2}}
\\
& =e^{\frac{i}{2}B n_1 n_2 }e^{i n_1 x_1 }e^{-i n_1 BD_{x_2 }}e^{in_2 x_2 }. \end{split} 
\]
If we define $W:=e^{\frac{B}{4}\Delta}V,$ then 
\[ 
\begin{split}
    \mathcal W (k) & =\sum_{n\in\mathbb{Z}^2}\widehat{W}(n)e^{\frac{i}{2}B n_1 n_2 }e^{i n_1 Bk_{2}}e^{in_1(x_1 -BD_{x_2})}e^{in_2 x_2 } 
\\
&=\sum_{n\in\mathbb{Z}^2}\widehat{W}(n)\Op^{\mathrm{w}}(e^{i(n_1x_1+n_2 x_2 -B n_1 (\xi_2 -k _{2}))}) =\Op^{\mathrm{w}}(a_{k}),
\end{split} 
\]
where the symbol $a_{k}(x, \xi ):=W (x_1 -B(\xi_2 -k _{2}), x_2 ).$  This shows that 
\[ H ( k ) = e^{ i \langle x, k \rangle } H e^{ -i \langle x, k \rangle } , \]
where $ H $ is given in \eqref{eq:effH} with $ W $ in \eqref{eq:effV}. We can now give
\begin{proof}[Proof of Theorem \ref{t:bands}]
We put 
\[ 
\begin{gathered} 
\mathcal{H}:=L^{2}(\mathbb{T}^{2}_{x}\times\mathbb{R}_{w})\cong L^{2}(\mathbb{T}^{2})\otimes 
    L^{2}(\mathbb{R}),
    \\
    \mathcal{D}:=H^{2}(\mathbb{T}^{2}_{x};L^{2}(\mathbb{R}_{w}))\cap L^{2}(\mathbb{T}^{2}_{x};\mathcal{D}(D^{2}_{w}+w^{2})), 
    \\
    H_0 :=\tfrac{1}{2}(D^{2}_{w}+w^{2}-1),\ \ \ \widetilde{V}(k):=U^{*}_k VU_k.
\end{gathered} 
\]
Let $\Pi_{0}: L^{2}(\mathbb{R})\ni u\mapsto \langle u , \psi_0 \rangle\psi_0 \in\ker H_0 $ be the orthogonal projection onto the ground state. We define
\begin{equation*}
\begin{gathered}
    \Pi:=I_{L^{2}(\mathbb{T}^{2})}\otimes\Pi_{0},\ \ \Pi^{\perp}:=I_{L^{2}(\mathbb{T}^{2})}\otimes (1-\Pi_{0}),\\R_{+}u:=\Pi u,\ \ R_{-}u_{-}:=-u_{-},
    \end{gathered}
\end{equation*}
and set up the Grushin problem
\begin{equation*}
    \mathcal{Q}^{J}(k, z):=\begin{pmatrix}
        \widetilde{H}_{J}(k)-z & R_{-}\\R_{+} & 0
    \end{pmatrix}: \mathcal{D}\times \mathrm{ran}\,\Pi\rightarrow \mathcal{H}\times\mathrm{ran}\,\Pi. 
\end{equation*}
We fix an arbitrary $E>0$ and assume $z\leq E$ and  $J\rightarrow\infty.$ We have for all $u\in\mathcal{D}$
\[
\langle\widetilde{H}_{J}(k)\Pi^{\perp} u, \Pi^{\perp}u\rangle\geq J\langle H_{0}\Pi^{\perp}u,\Pi^{\perp}u\rangle+\langle\widetilde{V}(k)\Pi^{\perp}u,\Pi^{\perp}u\rangle\geq (J-\|\widetilde{V}(k)\|_{L^{2}\rightarrow L^{2}})\|\Pi^{\perp}u\|^{2}_{L^{2}}.
\]
This implies that $\Pi^{\perp}\widetilde{H}_{J}(k)\Pi^{\perp}-z$ is invertible on $\mathrm{ran}\,\Pi^{\perp}$ and
\[
\|(\Pi^{\perp}\widetilde{H}_{J}(k)\Pi^{\perp}-z)^{-1}\|_{L^{2}\rightarrow L^{2}}\leq (J-\|\widetilde{V}(k)\|_{L^{2}\rightarrow L^{2}}-E)^{-1}=\mathcal O(\frac{1}{J}).
\]
Hence, we can write
\[
\mathcal{Q}^{J}(k, z)^{-1}=\begin{pmatrix}
    E^{J}(k, z) & E^{J}_{+}(k, z)\\E^{J}_{-}(k, z) & \mathcal{F}^{J}_{-+}(k, z)
\end{pmatrix},
\]
where \[
\begin{split}
\mathcal{F}_{-+}^{J}(k, z)&:=(\Pi\widetilde{H}_{J}(k)\Pi-z)-\Pi\widetilde{H}_{J}(k)\Pi^{\perp}(\Pi^{\perp}\widetilde{H}_{J}(k)\Pi^{\perp}-z)^{-1}\Pi^{\perp}\widetilde{H}_{J}(k)\Pi
\\
&\ =(\Pi\widetilde{H}_{J}(k)\Pi-z)-\Pi\widetilde{V}(k)\Pi^{\perp}(\Pi^{\perp}\widetilde{H}_{J}(k)\Pi^{\perp}-z)^{-1}\Pi^{\perp}\widetilde{V}(k)\Pi.
\end{split}
\]
We have
\[
z\in\Spec (\widetilde{H}_{J}(k))\iff 0\in\Spec (\mathcal{F}^{J}_{-+}(k, z))
\]
and the eigenvalues have the same multiplicity.

We identify $\Pi\widetilde{H}_{J}(k)\Pi$ as the effective Hamiltonian $H(k)$ and consider the set\[
\Sigma_{E}:=\{(k,\lambda)\in \mathbb R^2 / \mathbb Z^2 \times[-C, E]\, : \,  \lambda\in\Spec(H(k))\},
\]
where $C>0$ is chosen such that $\Spec(H(k))\subseteq[-C,\infty)$ for all $k.$ For each point $(k_{*},\lambda_{*})\in \Sigma_{E}$, choose a small disc $D_{*}\subseteq\mathbb{C}$ centered at $\lambda_{*}$ such that $\partial D_{*}\cap \,\Spec(H(k_{*}))=\emptyset.$ After shrinking to a neighborhood $U_{*}\ni k_{*}$, $D_{*}$ contains exactly $m_{*}$ eigenvalues of $H(k_{*})$ counted with multiplicity, for every $k\in U_{*}$.
\\\\
We let $\{U_{k}\times (D_{k}\cap\mathbb{R)}\}_{k=1}^{n(E)}$ be a finite open cover of $\Sigma_{E}$ satisfying the conditions above, and set up a local Grushin problem on each $U_{*}.$ We define
\[
\Pi_{*}(k):=-\frac{1}{2\pi i}\int_{\partial D_{*}}(H(k)-\zeta)^{-1}\mathrm{d}\zeta,\ \ \ \mathrm{rank}\,\Pi_{*}(k)=m_{*}\ \mathrm{on}\ U_{*}.
\]
We now choose a local orthonormal frame
\[
\varphi_{1}(k),\dots,\varphi_{m_{*}}(k)\in\mathrm{Ran}\,\Pi_{*}(k),
\]
and define
\[
\begin{gathered}
R_{-,*}(k):\mathbb{C}^{m_{*}}\ni c\mapsto\sum_{j=1}^{m_{*}}c_{j}\varphi_{j}(k)\in L^{2}(\mathbb{T}^{2}),
\\
R_{+,*}(k):H^{2}(\mathbb{T}^{2})\ni u\mapsto(\langle u,\varphi_{1}(k)\rangle,\dots,\langle u,\varphi_{m_{*}}(k)\rangle)\in\mathbb{C}^{m_{*}}.    
\end{gathered}
\]
With this choice, the Grushin problem
\[
\mathcal{P}^{0}_{*}(k, z):=\begin{pmatrix}
    H(k)-z & R_{-,*}(k)\\ R_{+,*}(k) & 0
\end{pmatrix}
\] is well-posed for $z\in D_{*}$ with
\[
\begin{gathered}
    \mathcal{P}^{0}_{*}(k, z)^{-1}=\begin{pmatrix}
     E^{0}_{*}(k, z) &   R_{-,*}(k)
     \\ 
     R_{+,*}(k) & E^{0}_{-+,*}(k, z)
    \end{pmatrix},
    \\
    E^{0}_{*}(k, z):=\big((H(k)-z)\big|_{\mathrm{Ran}(I-\Pi_{*}(k))}\big)^{-1}(I-\Pi_{*}(k)),
    \\
    E^{0}_{-+,*}(k, z):=zI_{\mathbb{C}^{m_{*}}}-\Pi_{*}(k)H(k)\Pi_{*}(k).
\end{gathered}
\]
We define
\[
W^{J}(k, z):=-\Pi\widetilde{V}(k)\Pi^{\perp}(\Pi^{\perp}\widetilde{H}_{J}(k)\Pi^{\perp}-z)^{-1}\Pi^{\perp}\widetilde{V}(k)\Pi,
\]
and replace $H(k)-z$ by
\[
\mathcal{F}_{-,+}^{J}(k, z)=H(k)-z+W^{J}(k, z).
\]
Define the perturbed Grushin problem
\[
\mathcal{P}_{*}^{J}(k, z):=\begin{pmatrix}
    \mathcal{F}_{-+}^{J}(k, z) & R_{-,*}(k)\\ R_{+,*}(k) & 0
\end{pmatrix}.
\]
Since as $J\rightarrow\infty$
\[
\|E^{0}_{*}(k, z)W^{J}(k, z)\|,\|W^{J}(k, z)E^{0}_{*}(k, z)\|\leq \frac{C_{*}}{J}
\]
uniformly for $k\in U_{*}$ and $z\in D_{*}$, \cite[Proposition 2.12]{notes} gives the well-posedness of $\mathcal{P}^{J}_{*}(k, z)$ with
\[
\mathcal{P}^{J}_{*}(k, z)^{-1}=\begin{pmatrix}
 E^{J}_{*}(k, z) & E^{J}_{+,*}(k, z)\\E^{J}_{-,*}(k, z) & E^{J}_{-+,*}(k, z)   
\end{pmatrix},
\]
where
\[
E^{J}_{-+,*}(k, z):=E^{0}_{-+,*}(k, z)+\sum_{l=1}^{\infty}(-1)^{\ell}R_{+,*}(k)W^{J}(k, z)(E_{*}^{0}(k, z)W^{J}(k, z))^{l-1}R_{-,*}(k).
\]
Hence, we have for $z\in D_{*}, \,k\in U_{*}$
\[
z\in\Spec(\widetilde{H}_{J}(k))\iff 0\in\Spec(\mathcal{F}^{J}_{-+}(k, z))\iff 0\in \Spec (E^{J}_{-+,*}(k, z)),
\]
and as $J\rightarrow\infty$
\[
\|E^{J}_{-+,*}(k, z)-(zI_{\mathbb{C}^{m_{*}}}-\Pi_{*}(k)H(k)\Pi_{*}(k))\|\leq\frac{C_{*}}{J}
\]
holds uniformly. Thus, \eqref{eq:J2eff} follows by taking $C$ to be the maximum of $C_{*}$ over finitely many patches.
\end{proof}

\section{Perturbative formula for the gap}
\label{s:pert}

In this section we consider 
the specific potential \eqref{eq:defW} from \cite{tri} using perturbation theory as $ V_0 e^{ - \frac{B}{4} } \to 0 $. 
To prove Theorem \ref{t:gap} we  construct a Grushin problem (see 
the previous section and \cite[\S 2.6]{notes}) for the operator
$P_{k}(B):=P(x, D-k; B)$,  $ P ( x, D; B ) := H  $ where $ H  $ is the effective Hamiltonian 
\eqref{eq:effH} (the notation is chosen for consistency with \cite[Example 13]{notes} which we will use). 

Putting 
$P_{k}:=(D_{x_{1}}-k_{1})^{2}+(D_{x_{2}}-k_{2})^{2}$, we see that
 \[\Spec_{L^{2}(\mathbb{R}^{2}/\Gamma)}P_{k}=\{|m-k|^{2}\, : \,  m\in\mathbb{Z}^{2}\}.\]
Inside the fundamental domain $(-\frac{1}{2},\frac{1}{2}]^{2},$ the first eigenvalue $\lambda_{1}(k)$ is not simple if and only if $k_{1}=\frac{1}{2}$ or $k_{2}=\frac{1}{2}.$ 
More precisely, if we denote the orthonormal eigenfunctions by 
$e_{m}(x):= (2 \pi)^{-1} \exp (i \langle m , x \rangle ) $, 
$ m\in\mathbb{Z}^{2}$,
then we have the following 3 cases: \begin{enumerate}
    \item [(1)]  At the horizontal edge: $-\frac{1}{2}<k_{1}<\frac{1}{2}$, $k_{2}=\frac{1}{2}$, $\lambda_{1}(k)=k_{1}^{2}+\frac{1}{4}$, $$\ker (P_{k}-\lambda_{1}(k))=\mathrm{span}(e_{0,0}, e_{0,1}).$$
    \item [(2)] At the vertical edge: $k_{1}=\frac{1}{2}, -\frac{1}{2}<k_{2}<\frac{1}{2}$, $\lambda_{1}(k)=\frac{1}{4}+k_{2}^{2},$ $$\ker (P_{k}-\lambda_{1}(k))=\mathrm{span} (e_{0,0}, e_{1,0}).$$
    \item [(3)] At the corner: $k_{1}=k_{2}=\frac{1}{2}$, $\lambda_{1}(k)=\frac{1}{2}$, $$\ker (P_{k}-\lambda_{1}(k))=\mathrm{span}(e_{0,0}, e_{1,0}, e_{0, 1}, e_{1,1}).$$
\end{enumerate}

To simplify notation, we denote the perturbation parameter and the effective potential by 
\[ \gamma:=e^{-\frac{B}{4}}V_{0}, \ \ \ V_{B}:=\gamma(\cos(x_{1}-B(D_{x_{2}}-k_{2}))+\cos x_{2}). \]
We first consider the case where $-\frac{1}{2}<k_{1}<\frac{1}{2},$ $k_{2}=\frac{1}{2}$, and 
define
\[R_{-}: \mathbb{C}^{2}\ni 
u_-\mapsto u_{-,1}e_{0,0}+u_{-,2}e_{0, 1},  \ \ \  R_+ := R_-^* :H^{2}(\mathbb{R}^{2}/\Gamma) \to \mathbb{C}^{2}. \]
(Here and below we take the standard inner product of $ \mathbb C^n $.)

For a sufficiently small $\epsilon>0$,  $|z-k_{1}^{2}-\frac{1}{4}|<\epsilon$, $ |\gamma|<\epsilon$, implies that 
the Grushin problem for $ P_{ (k_1, \frac12)} -z  $ with these $ R_\pm $
is well posed and, with the standard notation for the inverse,
\begin{equation}
    \label{eq:EEp}
 \begin{gathered} Ev:=\sum_{m\neq (0,0), (0, 1)}((m_{1}-k_{1})^{2}+(m_{2}-\tfrac{1}{2})^{2}-z)^{-1}\langle v, e_{m}\rangle e_{m}, \\
E_- = R_+, \ \ \ E_+ = R_- , \ \ \ E_{-+}:=(z-k_{1}^{2}-\tfrac{1}{4})I_{\mathbb{C}^{2}}. \end{gathered} \end{equation}
The perturbed Grushin problem, with $ P_{(k,1/2)} $ replaced by $P_{(k,1/2) } ( B ) $, is also well posed. The corresponding $ E_{-+}^\gamma $ term is given by \cite[Proposition 2.12]{notes} using 
\eqref{eq:EEp}:
\begin{equation}\label{neumann}
E^{\gamma}_{-, +}:=E_{-+}+\sum_{l=1}^{\infty}(-1)^{\ell}R_{+}V_{B}(EV_{B})^{l-1}R_{-}.    
\end{equation} 
From\[\cos(x_{1}-B(D_{x_{2}}-k_{2}))e_{m_{1}, m_{2}}=\tfrac{1}{2}(e^{-iB(m_{2}-k_{2})}e_{m_{1}+1, m_{2}}+e^{iB(m_{2}-k_{2})}e_{m_{1}-1, m_{2}}),\]\[\cos(x_{2})e_{m_{1}, m_{2}}=\tfrac{1}{2}(e_{m_{1}, m_{2}+1}+e_{m_{1}, m_{2}-1}),\] 
we obtain 
\[E_{-, +}^{\gamma}=\begin{pmatrix}
    z-k_{1}^{2}-\tfrac{1}{4} & -\tfrac 12 \gamma  \\ -\tfrac12 \gamma  & z-k_{1}^{2}-\tfrac{1}{4}
\end{pmatrix}+ \mathcal O(\gamma^{2}) .\]
Since $ E_j ( (k_1, 1/2 );B) $, $ j = 1, 2 $ are the values of $ z $ for which 
$ \det E^\gamma_{-+} ( z) = 0 $ we obtain
\[ E_2 ((k_{1},\tfrac{1}{2});B )-E_{1}((k_{1}, \tfrac{1}{2});B)=|\gamma|+\mathcal{\mathcal O}(\gamma^{2}).\]
A similar argument applies to the case $k=(\frac{1}{2}, k_{2}),$ and gives
\[
E^{\gamma}_{-,+}=\begin{pmatrix}
z-k_{2}^{2}-\tfrac{1}{4} & -\tfrac12 \gamma 
e^{-iBk_{2}}\\-\tfrac12 \gamma e^{iBk_{2}} & z-k_{2}^{2}-\tfrac{1}{4}
\end{pmatrix}+\mathcal \mathcal O(\gamma^{2}),
\]
so that 
\[E_{2}((\tfrac{1}{2},k_{2});B )-E_{1}((\tfrac{1}{2}, k_{2});B)=|\gamma|+\mathcal{\mathcal O}(\gamma^{2}).\]

The interesting case is given by $k_0 :=(\frac{1}{2},\frac{1}{2}).$ We define 
\begin{equation}
    \label{eq:defRpm} 
    \begin{split}
& R_{-} :\mathbb{C}^{4}\ni 
u_- 
\mapsto u_{-,1}e_{0,0}+u_{-,2} e_{1,0}+u_{-,3}e_{0, 1}+u_{-,4} e_{1,1}\in L^{2}(\mathbb{R}^{2}/\Gamma), \\
& R_+ = R_-^* : H^2 ( \mathbb R^2/\Gamma ) \to \mathbb C^4 .
\end{split}
\end{equation}
If  $z\in B(\frac{1}{2}, \epsilon)$, $|\gamma|<\epsilon$, $ 
0 < \varepsilon \ll 1 $, the corresponding Grushin problem
for $ P_{k_0} $ is well posed, and
$E: L^{2}(\mathbb{R}^{2}/\Gamma)\rightarrow H^{2}(\mathbb{R}^{2}/\Gamma),$ $E_{-+}:\mathbb{C}^{4}\rightarrow\mathbb{C}^{4}$ are given by 
$$Ev := \sum_{\substack{m \neq (0,0),(0,1)\\ (1,0),(1,1)}} 
\bigl((m_1-\tfrac12)^2+(m_2-\tfrac12)^2-z\bigr)^{-1}\,\langle v,e_m\rangle\, e_m,\ \ E_{-+}:=(z-\tfrac{1}{2})I_{\mathbb{C}^{4}}.$$ 
The perturbed Grushin problem, with $ P_{k_0} $ replaced by $P_{k_0}  ( B ) $, is also well posed and the corresponding $ E_{-+}^\gamma $ term is given by the analogue of \eqref{neumann}.
A computation then shows that
\[E^{\gamma}_{-, +}=(z-\tfrac{1}{2})I_{\mathbb{C}^{4}}-\tfrac12 \gamma \begin{pmatrix}
    0 & e^{-\frac{iB}{2}} & 1 & 0\\e^{\frac{iB}{2}} & 0 & 0 & 1\\1 & 0 & 0& e^{\frac{iB}{2}}\\0 & 1 & e^{-\frac{iB}{2}} & 0
\end{pmatrix}+\mathcal{\mathcal O}(\gamma^{2} ),\]
and that the spectrum of the 4-by-4 matrix above is given by 
$\pm 2\cos(\frac14{B}) $, $ \pm 2\sin(\frac14{B}) $. 

Hence, up to $\mathcal \mathcal O(V_{0}^{2}e^{-\frac{B}{2}})$, the first four eigenvalues of $P_{k_0} (B)$  are given by 
$$\left\{\tfrac{1}{2}\pm V_{0} e^{-\frac{B}4} \cos(\tfrac14{B}), \tfrac{1}{2}\pm V_{0} e^{-\frac{B}4} \sin(\tfrac14{B}) \right\}.$$ 
This gives \eqref{eq:gap} proving Theorem \ref{t:gap}. 


\section{Dirac cones at $ B = ( 2 \ell + 1 ) \pi $}
\label{s:Dirac}

This section is devoted to the proof of Theorem \ref{t:Dirac}. We first show that bands always touch 
when $ B $ is an odd multiple of $ \pi $ (as already suggested by \eqref{eq:gap}). Then 
we show that for small values of $ V_0 e^{ - ( 2 \ell + 1 ) \pi/4 }$ the first two bands touch at a Dirac point. Finally, we apply results of \cite{DrLy26} to see that Dirac points persist for all but (possibly) a discrete set of coupling constants. 

\subsection{Symmetry induced degeneracy}

Let $ k_0 = ( \frac12 , \frac12 ) $. We claim that for the operator \eqref{eq:effHk} with the potential 
\eqref{eq:defW}, 
\begin{equation}
\label{eq:evend}    \forall \, E \in \mathbb R \ \ 
\dim  \ker_{ H^2 ( \mathbb R^2/\Gamma )} ( H ( k_0, ( 2 \ell + 1 )\pi  ) - E ) \in 2 \mathbb N . 
\end{equation}
To see this we introduce the following two unitary operators defined on 
$  L^{2}(\mathbb{R}^{2}/\Gamma) $:
\begin{equation}
\label{eq:defS}
(S_{0}u)(x_{1}, x_{2}):=e^{ix_{2}}u(x_{1}+\pi,-x_{2}),\ \ \ \ (S_{1}u)(x_{1}, x_{2}):=e^{i(x_{1}+x_{2})}u(-x_{1}, -x_{2}).
 \end{equation}
We then check that 
\[  S_j^2 = I, \ \ \ [  H ( k_0, (2\ell+1)\pi) , S_j ] = 0 , \ \ \ j = 0,1 , \]
and that
\[ S_1 S_0 = - S_0 S_1 .\]
This shows that every eigenspace of $ H ( k_0 , (2\ell+1)\pi)$ is even dimensional.
In fact, since eigenvalues of $ S_j $ are $ \pm 1 $, for any $ E $,
\[  V_E:= \ker ( H ( k_0, (2 \ell+1)\pi ) - E ) = V_E^+ \oplus V_E^- , \ \ 
V_E^\pm := \{ u \in V_E : S_0 u = \pm u \} . \]
But then $ S_1 : V_E^\pm \to V_E^\mp $ and hence $ \dim V_E^+ = \dim V_E^- $. 

\subsection{Perturbative calculations near $(2\ell+1)\pi$}
\label{s:Diracc} 
We let $\gamma:=V_{0}e^{-\frac{B}{4}}$,  $ R_\mp $ be as in \eqref{eq:defRpm}. Then 
for  $|z-\frac{1}{2}|<\epsilon,$ $|k-(\frac{1}{2},\frac{1}{2})|<\epsilon$, 
$ \varepsilon \ll 1 $ the corresponding Grushin problem for 
$ P_k - z $ is well posed. In the inverse $ E_\pm = R_\mp $, and
$$Ev := \sum_{\substack{m \neq (0,0),(0,1)\\ (1,0),(1,1)}} 
\bigl((m_1-k_{1})^2+(m_2-k_{2})^2-z\bigr)^{-1}\,\langle v,e_m\rangle\, e_m,$$
$$E_{-+}:= z I_{\mathbb C^4 } - \begin{pmatrix}
    k_{1}^{2}+k_{2}^{2} & & & \\ & (1-k_{1})^{2}+k_{2}^{2} & & \\ & & k_{1}^{2}+(1-k_{2})^{2} & \\ & & & (1-k_{1})^{2}+(1-k_{2})^{2}
\end{pmatrix}.$$

Using \eqref{neumann}, we obtain
\begin{align*}
    E^{\gamma}_{-,+}&=E_{-,+}-\frac{\gamma}{2}\begin{pmatrix}
    0 & e^{-iBk_{2}} & 1 & 0\\e^{iBk_{2}} & 0 & 0 & 1\\1&0&0&e^{iB(1-k_{2})}\\0 & 1& e^{-iB(1-k_{2})} & 0
\end{pmatrix}+\mathcal \mathcal O(\gamma^{2})\\&=:\widetilde{E}^{\gamma}_{-,+}+\mathcal \mathcal O(\gamma^{2}).
\end{align*}
We let $B_{0}:=(2\ell+1)\pi$, $\kappa:=k-(\frac{1}{2},\frac{1}{2}),$\ $\delta:=B-B_{0}.$ 
A direct calculation shows
\begin{align*}
    \det(\widetilde{E}^{\gamma}_{-,+}(z, k, B))=0\iff z\in \{\widetilde{z}_{1}(\kappa,B),\widetilde{z}_{2}(\kappa,B), \widetilde{z}_{3}(\kappa,B), \widetilde{z}_{4}(\kappa,B)\},
\end{align*}
where $\widetilde{z}_{1}\leq \widetilde{z}_{2}\leq \widetilde{z}_{3}\leq \widetilde{z}_{4}$ are given by
\[\widetilde{z}_{1,2,3,4}(\kappa, B)=\tfrac{1}{2}+|\kappa|^{2}\pm \sqrt{|\kappa|^{2}+\tfrac{\gamma^{2}}{2}\pm2\sqrt{\kappa_{1}^{2}\kappa_{2}^{2}+\tfrac{\gamma^{2}}{4}|\kappa|^{2}+\tfrac{\gamma^{4}}{16}\sin^{2}(\tfrac{\delta}{2}) }}.\]

Hence, if $|\gamma|$ is sufficiently small and fixed, then near $\kappa=0$ and $B=B_{0}$ the first four eigenvalues of $P_{k}(B)$ have the following approximate formulas
\begin{equation}\label{appformula}
z ( \kappa, B ) = \tfrac12 \pm 2^{-\frac12}  \gamma \pm
2^{-\frac12} \sqrt{ |\kappa|^2 + \tfrac14 \gamma^{2} \sin^{2}(\tfrac{\delta}{2}) } + \mathcal{O}_{\gamma}(|\kappa|^2 + \delta^{2})+\mathcal O (\gamma^{2}).  
\end{equation}


\subsection{Dirac cones for small values of $V_{0}e^{-\frac{B}{4}}$}
We fix
\[
k_{0}:=(\tfrac{1}{2},\tfrac{1}{2}),\ B_{0}:=(2\ell+1)\pi,\ \gamma:=V_{0}e^{-\frac{B_{0}}{4}}.
\]
Assume $0<|\gamma|\ll1$, so that by \eqref{eq:evend}, \eqref{appformula} the first eigenvalue $E_{0}$ of $H(k_{0}, B_{0})$ has multiplicity 2.
To justify the first two bands $E_{1, 2}(k, B_{0})$ touch at a Dirac point, we need to remove the correction term $\gamma^{2}$ from \eqref{appformula}. More precisely, we prove
\begin{prop}
    Let $\rho:=|\kappa|+|\delta|$ be sufficiently small. There exist constants $a,b,c,\alpha,\beta\in\mathbb{R}$ such that the first two eigenvalues of $H(k, B)$ are given by the approximate formula
    \begin{equation}
    \label{diracconesforsmallgamma}
    E_{1, 2}(k, B)=E_{0}+a\delta\pm\sqrt{(\alpha\kappa_{1}+\beta\kappa_{2})^{2}+c^{2}\kappa_{2}^{2}+b^{2}\delta^{2}}+\mathcal{O}(\rho^{2}).
    \end{equation}
\end{prop}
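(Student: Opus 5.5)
The plan is to carry out an \emph{exact} (not perturbative in $\gamma$) Grushin reduction onto the two-dimensional eigenspace $V_{E_0}=\ker(H(k_0,B_0)-E_0)$. Then I would compute the linear part of the resulting $2\times 2$ effective matrix in $(\kappa,\delta)$ and use the symmetries $S_0,S_1$ of \eqref{eq:defS} to constrain its form. The multiplicity is exactly $2$ by \eqref{eq:evend} and \eqref{appformula} for $0<|\gamma|\ll1$.

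\textbf{Basis and Grushin problem.} By the argument after \eqref{eq:defS}, we may pick a normalized $u\in V^+_{E_0}$, so that $S_0u=u$, and set $v:=S_1u\in V^-_{E_0}$. Then $\{u,v\}$ is an orthonormal basis of $V_{E_0}$. Let $R_-:\mathbb C^2\ni c\mapsto c_1u+c_2v$ and $R_+=R_-^*$. The Grushin problem for $H(k,B)-z$ is well posed for $|z-E_0|+\rho\ll1$, since $H(k,B)$ depends smoothly on $(k,B)$ as an operator $H^2\to L^2$. Here the $B$-dependence enters through the Weyl symbol $W(x_1-B(\xi_2-k_2),x_2)$ and through $e^{-B/4}$, and the latter is smooth. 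By \cite[Proposition 2.12]{notes},
\[ E_{-+}(z,k,B)=(z-E_0)I-M(\kappa,\delta)+\mathcal O\big(|z-E_0|\rho+\rho^2\big), \]
where $M=\kappa_1M_1+\kappa_2M_2+\delta M_3$ and $M_j=R_+\,\partial_jH\,R_-$ are Hermitian $2\times2$ matrices. The derivatives $\partial_jH$ are taken in $k_1,k_2,B$ at $(k_0,B_0)$. The bands $E_{1,2}(k,B)$ are the zeros of $\det E_{-+}$. Writing $M=m_0I+\vec m\cdot\vec\sigma$ with $m_0$ and $\vec m$ linear in $(\kappa,\delta)$, a standard implicit-function/Rouch\'e argument on the $2\times2$ determinant gives
\[ E_{1,2}=E_0+m_0\pm|\vec m|+\mathcal O(\rho^2). \]
This holds because $|\vec m|$ is Lipschitz and the remainder is quadratic.

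\textbf{Symmetry constraints.} At $B=B_0$, the conjugation formulas for $S_1$ show that $S_1H(k_0+\kappa,B_0)S_1^{-1}=H(k_0-\kappa,B_0)$, since the map $x\mapsto-x$ together with the phase $e^{i(x_1+x_2)}$ sends $k_0+\kappa\mapsto k_0-\kappa$. Hence $S_1\partial_{k_j}H\,S_1=-\partial_{k_j}H$. In the basis $(u,v)$ the operator $S_1$ acts as $\sigma_1$, so $\sigma_1M_j\sigma_1=-M_j$ for $j=1,2$. Therefore $\operatorname{tr}M_j=0$, so there is no tilt in $\kappa$ and the trace part reduces to $m_0=a\delta$. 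It also forces $M_j\in\operatorname{span}(\sigma_2,\sigma_3)$. Similarly, $S_0$ (which sends $x_2\mapsto -x_2$ and shifts $x_1$ by $\pi$) sends $\kappa_2\mapsto-\kappa_2$ and fixes $\kappa_1$, and it acts as $\sigma_3$ in the basis $(u,v)$. This gives $\sigma_3M_1\sigma_3=M_1$ and $\sigma_3M_2\sigma_3=-M_2$, so $M_1=\alpha\sigma_3$ and $M_2=\beta\sigma_3+c\sigma_2$. For $M_3$ I would use the same relations at $\delta\neq0$, checking how $S_0,S_1$ transform $H(k,B)$ when $B\ne B_0$: the phases $e^{iB(\cdot)}$ pick up factors $e^{i\delta(\cdot)}$. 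The aim is to show that the traceless part of $M_3$ lies along $\sigma_1$, the direction orthogonal to $\sigma_2,\sigma_3$. Then
\[ |\vec m|^2=(\alpha\kappa_1+\beta\kappa_2)^2+c^2\kappa_2^2+b^2\delta^2, \]
with no $\kappa\delta$ cross terms, which is exactly \eqref{diracconesforsmallgamma}.

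\textbf{Main obstacle.} I expect the main difficulty to be the treatment of $\delta$. The $S_j$ are symmetries only at $B=B_0$, so the orthogonality of the $\delta$-direction must come from an explicit first-order computation: how the commutation relations of $S_0,S_1$ with $H(k,B)$ deform to first order in $\delta$. If that fails, I would fall back on the $\gamma$-expansion of the Grushin matrix $\widetilde E^\gamma_{-+}$ in \S\ref{s:Diracc}. The $\delta$-dependence appears there only through $\tfrac{\gamma^2}4\sin^2(\delta/2)$, and this identifies the $\sigma_1$ direction. Finally, non-degeneracy ($\alpha,c\neq0$, so the cone is genuine) follows by comparing with \eqref{appformula}, which gives slope $2^{-1/2}$ in $|\kappa|$ up to $\mathcal O(\gamma)$ errors.
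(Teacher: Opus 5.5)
Your setup matches the paper's: a Grushin reduction onto $\ker(H(k_0,B_0)-E_0)$ in a basis where $S_0=\sigma_3$ and $S_1=\sigma_1$, linearization in $(\kappa,\delta)$, and symmetry constraints on the three $2\times2$ coefficient matrices. The gap is the $\delta$-coefficient $M_3$. This is exactly what gives \eqref{diracconesforsmallgamma} its specific form: $\delta$ enters only as $a\delta$ and $b^2\delta^2$, with no $\kappa\delta$ cross terms under the square root. You leave this step as an ``aim''. Your fallback cannot close it, because the $\gamma$-expansion of \S\ref{s:Diracc} determines the Grushin matrix only modulo $\mathcal O(\gamma^2)$, so it cannot show that a cross term vanishes exactly at fixed $\gamma$.

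The missing observation is that you never need the symmetries at $B\neq B_0$. Since $R_\pm$ are frozen at $(k_0,B_0)$, $M_3=R_+QR_-$ with the fixed operator $Q=\partial_BH(k_0,B_0)=-\gamma(Q_1+Q_2)$, and you only need to know how $Q$ commutes with $S_0,S_1$. The paper checks directly that $[Q_1,S_0]=[Q_1,S_1]=[Q_2,S_1]=0$ and $Q_2S_0=-S_0Q_2$, which gives $M_3=aI+b\sigma_1$ (the paper's $M$).

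There is an even shorter route: your worry about phases $e^{i\delta(\cdot)}$ does not apply to $S_1$. Since $S_1D_{x_j}S_1^{-1}=1-D_{x_j}$ and $S_1$ maps $\cos(x_1-B(D_{x_2}-k_2))$ to $\cos(-x_1+B(D_{x_2}-(1-k_2)))$, one has $S_1H(k,B)S_1^{-1}=H((1,1)-k,B)$ for \emph{every} $B$. Differentiating in $B$ gives $S_1QS_1=Q$, i.e.\ $\sigma_1M_3\sigma_1=M_3$. For a Hermitian $2\times2$ matrix this is exactly $M_3\in\mathrm{span}(I,\sigma_1)$.

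Your $S_0$ step is also wrong as written. $S_0$ does not intertwine $H(k_1,k_2,B_0)$ with $H(k_1,1-k_2,B_0)$. It sends $(D_{x_2}-k_2)^2$ to $(D_{x_2}-(1-k_2))^2$ but leaves $\cos(x_1-B_0(D_{x_2}-k_2))$ unchanged: its argument shifts by $\pi+B_0(2D_{x_2}-1)$, which lies in $2\pi\mathbb Z$ on every Fourier mode. Hence $K_2=-2(D_{x_2}-\frac12)-\gamma B_0\sin(x_1-B_0(D_{x_2}-\frac12))$ is the sum of an $S_0$-odd and an $S_0$-even operator, and $\sigma_3M_2\sigma_3=-M_2$ is false. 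Combined with $M_2\in\mathrm{span}(\sigma_2,\sigma_3)$ it would force $\beta=0$, contradicting the $\beta\sigma_3$ you then write down.

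The correct bookkeeping is as follows. $S_1$ alone gives $\mathrm{tr}\,M_1=\mathrm{tr}\,M_2=0$ and $M_2=\beta\sigma_3+c\sigma_2$. $S_0$ is used only for $M_1$, since $K_1=-2(D_{x_1}-\frac12)$ genuinely commutes with $S_0$. For the formula as stated even this use of $S_0$ is optional, because any positive semidefinite quadratic form in $\kappa$ can be written as $(\alpha\kappa_1+\beta\kappa_2)^2+c^2\kappa_2^2$.

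Finally, your non-degeneracy remark lies outside this proposition, but it does not work either. The slopes cannot be read off \eqref{appformula}, whose $\kappa$-independent $\mathcal O(\gamma^2)$ error dominates the linear term as $\kappa\to0$. The paper instead computes $\alpha$ and $c$ from approximate eigenvectors.
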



\begin{proof}
    We define
\[
\begin{split}
W(k, B)&:=H(k, B)-H(k_{0}, B_{0})\\&\ =\kappa_{1}K_{1}+\kappa_{2}K_{2}+\delta Q+\mathcal{O}_{H^{2}\rightarrow L^{2}}(\rho^{2}),
\end{split}
\]
where
\[
\begin{gathered}
    K_{1}:=-2(D_{x_{1}}-\tfrac{1}{2}),\\
    K_{2}:=-2(D_{x_{2}}-\tfrac{1}{2})-\gamma B_{0}\sin (x_{1}-B_{0}(D_{x_{2}}-\tfrac{1}{2})),
    \end{gathered} \]
    and
    \[   \begin{split} Q &:=-\gamma(Q_{1}+Q_{2})\\
    & = -\gamma\Big(\tfrac{1}{4}\big(\cos(x_{1}-B_{0}(D_{x_{2}}-\tfrac{1}{2}))+\cos x_{2}\big)-(D_{x_{2}}-\tfrac{1}{2})\sin (x_{1}-B_{0}(D_{x_{2}}-\tfrac{1}{2}))\Big). \end{split}
\]


Let $\Pi_{0}$ be the spectral projector onto $\ker (H(k_{0}, B_{0})-E_{0})$. By \eqref{eq:evend}, we can choose an orthonormal basis $\{\psi_{+},\psi_{-}\}\subseteq\mathrm{Ran}\,\Pi_{0}$ such that
\begin{equation}
\label{psipm}
 S_{0}\psi_{\pm}=\pm\psi_{\pm},\ \ S_{1}\psi_{\pm}=\psi_{\mp}.   
\end{equation}

We put
\[
R_{+}:H^{2}(\mathbb{R}^{2}/\Gamma)\ni u\mapsto \begin{pmatrix}
    \langle  u,\psi_{+}\rangle\\\langle u, \psi_{-}\rangle
\end{pmatrix}\in\mathbb{C}^{2},\]\[R_{-}:\mathbb{C}^{2}\ni \begin{pmatrix}
    c_{+}\\c_{-}
\end{pmatrix}\mapsto c_{+}\psi_{+}+c_{-}\psi_{-}\in L^{2}(\mathbb{R}^{2}/\Gamma).
\]
If we let $|z-E_{0}|+|\delta|+|\kappa|<\epsilon\ll 1,$ then the Grushin problem
\[
\begin{pmatrix}
    H(k, B)-z & R_{-}\\R_{+} & 0
\end{pmatrix}:H^{2}(\mathbb{R}^{2}/\Gamma)\times\mathbb{C}^{2}\rightarrow L^{2}(\mathbb{R}^{2}/\Gamma)\times\mathbb{C}^{2}
\]
is invertible with
\[
\begin{gathered}
E_{-+}(z, k, B)=(z-E_{0})I_{\mathbb{C}^{2}}-\kappa_{1}A_{1}-\kappa_{2}A_{2}-\delta M+\mathcal{O}(\rho^{2}),\\A_{1}:=R_{+}K_{1}R_{-},\ A_{2}:=R_{+}K_{2}R_{-},\ M:=R_{+}QR_{-}.
\end{gathered}
\]
We denote
\[
\sigma_{1}:=\begin{pmatrix}
    0 & 1\\1 & 0
\end{pmatrix},\ \ \ \sigma_{2}:=\begin{pmatrix}
    0 & -i\\i & 0
\end{pmatrix},\ \ \ \sigma_{3}:=\begin{pmatrix}
    1 & 0\\0& -1
\end{pmatrix}.
\]
A computation shows that
\[
[K_{1}, S_{0}]=0, K_{1}S_{1}=-S_{1}K_{1},
\]
\[
(D_{x_{2}}-\tfrac{1}{2})S_{0}=-S_{0}(D_{x_{2}}-\tfrac{1}{2}),\ (D_{x_{2}}-\tfrac{1}{2})S_{1}=-S_{1}(D_{x_{2}}-\tfrac{1}{2}),
\]
\[
[\sin(x_{1}-B_{0}(D_{x_{2}}-\tfrac{1}{2})), S_{0}]=0, \]
\[ \sin(x_{1}-B_{0}(D_{x_{2}}-\tfrac{1}{2}))S_{1}=-S_{1}\sin(x_{1}-B_{0}(D_{x_{2}}-\tfrac{1}{2})), \]
\[
[Q_{1}, S_{0}]=[Q_{1}, S_{1}]=[Q_{2}, S_{1}]=0, \ Q_{2}S_{0}=-S_{0}Q_{2}. 
\]
It follows that there exist $\alpha,\beta,a,b,c\in\mathbb{R}$ such that
\[
A_{1}=:\alpha\sigma_{3},\ A_{2}=:c\sigma_{2}+\beta\sigma_{3},\ M=:a I_{\mathbb{C}^{2}}+b\sigma_{1}.
\]
Hence, the two eigenvalues of $E_{-+}(z,k,B)$ are given by the approximate formula
\[
z-E_{0}-a\delta\pm\sqrt{(\alpha\kappa_{1}+\beta\kappa_{2})^{2}+c^{2}\kappa_{2}^{2}+b^{2}\delta^{2}}+\mathcal{O}(\rho^{2}).
\]
This completes the proof of the Proposition.
\end{proof}

Taking $\delta=0$ in \eqref{diracconesforsmallgamma}, we obtain
\begin{equation}
\label{diracconesforsmallgamma2}
E_{1,2}(k, B_{0})=E_{0}\pm\sqrt{(\alpha\kappa_{1}+\beta\kappa_{2})^{2}+c^{2}\kappa_{2}^{2}}+\mathcal{O}(|\kappa|^{2}).   
\end{equation}
To prove Theorem \ref{t:Dirac} for $|\gamma|$ sufficiently small, it remains to show
\begin{equation}
\label{diracconesforsmallgamma3}
\alpha=\langle-2(D_{x_{1}}-\tfrac{1}{2})\psi_{+},\psi_{+}\rangle\neq 0\ \ \mathrm{and}\ \ c=i\langle-2(D_{x_{2}}-\tfrac{1}{2})\psi_{-},\psi_{+}\rangle\neq 0   \end{equation}

We use Fourier modes to approximate the eigenfunctions $\psi_{\pm}$ defined in \eqref{psipm} to compute the constants $\alpha,c\in\mathbb{R}$ up to leading order in $\gamma.$

We denote 
\[
\begin{split}
 P_{k_{0}}:=(D_{x_{1}}-\tfrac{1}{2})^{2}+(D_{x_{2}}-\tfrac{1}{2})^{2},\ \ H:=H(k_{0}, B_{0}),
\\
V:=\cos(x_{1}-B_{0}(D_{x_{2}}-\tfrac{1}{2}))+\cos x_{2}. 
\end{split}
\]
Let $\Pi$ be the projector onto $K_{0}:=\mathrm{Ker}(P_{k_{0}}-\frac{1}{2})$ and $\Pi^{\perp}:=1-\Pi.$ Since $\psi_{\pm}$ solve
\[
(H-E_{0})\psi_{\pm}=0
\]
writing $\psi_{\pm}=\Pi\psi_{\pm}+\Pi^{\perp}\psi_{\pm}=:u_{\pm}+u_{\pm,\perp}$ gives
\begin{equation}
\label{approximation}
F(E_{0})u_{\pm}=0,\ \ u_{\pm,\perp}=-\gamma(\Pi^{\perp}(H-E_{0})\Pi^{\perp})^{-1}\Pi^{\perp}V\Pi u_{\pm},
\end{equation}
where
\begin{equation}
\label{approximation2}
\begin{split}
F(E_{0})&=\Pi(H-E_{0})\Pi-\Pi H\Pi^{\perp}(\Pi^{\perp}(H-E_{0})\Pi^{\perp})^{-1}\Pi^{\perp}H\Pi\\&=(\tfrac{1}{2}-E_{0})I_{K_{0}}+\gamma\Pi V\Pi+\mathcal{O}(\gamma^{2}).
\end{split}
\end{equation}

We decompose
\[\begin{gathered}
K_{0}=K_{0,+}\oplus K_{0,-},\ \ K_{0,\pm}:=\{u\in K_{0}\mid S_{0}u=\pm u\}=\mathrm{span}(f_{1,\pm}, f_{2,\pm}),\\f_{1,\pm}=\tfrac{e_{0,0}\pm e_{0, 1}}{\sqrt{2}},\ \ f_{2,\pm}=\tfrac{e_{1, 0}\mp e_{1,1}}{\sqrt{2}}.
\end{gathered}
\]
We let $\Pi_{\pm}$ be the projector onto $K_{0,\pm}$ and take 
\[
v_{\pm}:=\frac{1}{\sqrt{1+(\sqrt{2}\mp 1)^{2}}}(f_{2,\pm}+(-1)^{\ell}i(\sqrt{2}\mp 1)f_{1,\pm})\in K_{0,\pm}
\]
to be the normalized eigenvectors of $\Pi_{\pm}V\Pi_{\pm}$ for the lowest eigenvalue $\lambda=-\frac{\sqrt{2}}{2}.$ Then \eqref{approximation}, \eqref{approximation2} give
\[
\begin{split}
|\alpha|=2|\langle(D_{x_{1}}-\tfrac{1}{2})v_{+},v_{+}\rangle|+\mathcal{O}(\gamma)=\tfrac{\sqrt{2}}{2}+\mathcal{O}(\gamma),\\ |c|=2|\langle (D_{x_{2}}-\tfrac{1}{2})v_{+}, v_{-}\rangle|+\mathcal{O}(\gamma)=\tfrac{\sqrt{2}}{2}+\mathcal{O}(\gamma). \end{split}
\]

\subsection{Persistence of Dirac cones} 
We denote
\[
H_{\gamma}(k):=(D_{x_{1}}-k_{1})^{2}+(D_{x_{2}}-k_{2})^{2}+\gamma\big(\cos(x_{1}-B_{0}(D_{x_{2}}-k_{2}))+\cos x_{2}\big).
\]

\begin{prop}
There exist real analytic families $ \gamma \mapsto \mu_\pm (\gamma)   $
of eigenvalues of $ H_\gamma ( k_{0}) $,
such that near $ \gamma = 0 $,
\[
\mu_{\pm}(\gamma)=\tfrac{1}{2}\pm\tfrac{\gamma}{\sqrt{2}}+\mathcal O(\gamma^{2}).
\]
and a discrete set $D_{\pm}\subseteq\mathbb{R}$ such that for   $ \gamma\in\mathbb{R}\backslash D_{\pm}:$
\begin{enumerate}
    \item [(1)] $\mu_{\pm}(\gamma)$ is an eigenvalue of $H_{\gamma}(k_{0})$ of multiplicity exactly 2;
    \item [(2)] for $E_{\pm, 1}(k,\gamma)\leq E_{\pm,2}(k,\gamma)$, eigenvalues of $H_{\gamma}(k)$ such that
    \[
    E_{\pm, 1}(k_{0},\gamma)=E_{\pm,2}(k_{0},\gamma)=\mu_{\pm}(\gamma),
    \]
we have 
\begin{equation*}
    \begin{split}
        E_{\pm, 1}(k_{0}+\kappa,\gamma)&=\mu_{\pm}(\gamma)-\sqrt{\langle A_\pm ( \gamma ) \kappa, \kappa \rangle } +\mathcal O(|\kappa|^{2}),\\E_{\pm, 2}(k_{0}+\kappa,\gamma)&=\mu_{\pm}(\gamma)+ \sqrt{\langle A_\pm ( \gamma ) \kappa, \kappa \rangle } +\mathcal O(|\kappa|^{2}),
    \end{split}
\end{equation*}
where the real analytic families of 2-by-2 matrices, $ \gamma \mapsto A_\pm (\gamma) $ are positive definite for $ \gamma \notin D_\pm $.
\end{enumerate}
\end{prop}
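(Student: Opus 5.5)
The plan is to use analytic perturbation theory in the complex coupling $\gamma$, exploiting the fact that $\gamma\mapsto H_\gamma(k_0)$ is a self-adjoint holomorphic family of type (A). The relative bound holds because $V=\cos(x_1-B_0(D_{x_2}-\frac12))+\cos x_2$ is bounded on $L^2$. The symmetries are the key input: $S_0$ and $S_1$ commute with $H_\gamma(k_0)$ for every $\gamma$ (the commutation computed in \S\ref{s:Dirac} is independent of $\gamma$). We restrict $H_\gamma(k_0)$ to $\mathcal H_\pm:=\{u: S_0u=\pm u\}$; since $S_1$ exchanges these spaces, the restricted operators $H^\pm_\gamma:=H_\gamma(k_0)|_{\mathcal H_+}$ and $S_1H_\gamma(k_0)S_1|_{\mathcal H_+}\cong H_\gamma(k_0)|_{\mathcal H_-}$ are unitarily equivalent. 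Thus each eigenvalue of $H_\gamma(k_0)$ has multiplicity $2\times$ its multiplicity in $\mathcal H_+$. At $\gamma=0$ the eigenvalue $\frac12$ in $\mathcal H_+$ has the two-dimensional space $K_{0,+}$. By the computation above, $\Pi_+V\Pi_+$ has distinct eigenvalues $\pm\frac{\sqrt2}{2}$, so the degeneracy splits at first order. Rellich's theorem then yields two real analytic eigenvalue branches $\mu_\pm(\gamma)$ of $H_\gamma|_{\mathcal H_+}$, defined for all $\gamma\in\mathbb R$ (branches of self-adjoint analytic families extend analytically along the real line, with discrete spectrum here), with $\mu_\pm=\frac12\pm\gamma/\sqrt2+\mathcal O(\gamma^2)$. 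Each branch also carries an analytic normalized eigenvector $\psi_+^\pm(\gamma)\in\mathcal H_+$, and we set $\psi_-^\pm:=S_1\psi_+^\pm$.

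For (1) I would argue as follows. Since $H_\gamma|_{\mathcal H_+}$ has simple $\mu_\pm$ for small $\gamma\ne0$, the set of $\gamma$ where $\mu_\pm(\gamma)$ coincides with another analytic branch of $H_\gamma|_{\mathcal H_+}$ is discrete. Two analytic branches that agree on a set with an accumulation point are identical, and only finitely many branches can cross a given compact $\gamma$-interval below any energy level, so only countably many locally finite coincidences occur. Outside this set, $\mu_\pm$ is simple in $\mathcal H_+$ and hence of multiplicity exactly $2$ in $L^2$.

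For (2) I would apply the Grushin reduction of the previous Proposition with the basis $\psi_\pm=\psi^\pm_\pm(\gamma)$ (abusing notation). The symmetry relations $[K_1,S_0]=0$, $K_1S_1=-S_1K_1$, and so on hold for all $\gamma$ at $\delta=0$. They give $E_{-+}=(z-\mu)I-\kappa_1\alpha(\gamma)\sigma_3-\kappa_2(c(\gamma)\sigma_2+\beta(\gamma)\sigma_3)+\mathcal O(|\kappa|^2)$ with $\alpha,\beta,c$ real analytic in $\gamma$, because they are matrix elements of analytic operators against analytic vectors. Setting
\[
\langle A(\gamma)\kappa,\kappa\rangle=(\alpha\kappa_1+\beta\kappa_2)^2+c^2\kappa_2^2,
\]
the form $A$ is positive definite if and only if $\alpha(\gamma)c(\gamma)\ne0$. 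Both are analytic, and by \eqref{diracconesforsmallgamma3} and the leading order computation $|\alpha|,|c|=\frac{\sqrt2}{2}+\mathcal O(\gamma)$, they are not identically zero. Their zero sets are therefore discrete, and adding them to the exceptional set gives $D_\pm$. The expansion of $E_{\pm,1},E_{\pm,2}$ then follows from the eigenvalues of $E_{-+}$ exactly as in \eqref{diracconesforsmallgamma2}.

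The main obstacle is the global analytic continuation of the branches $\mu_\pm$ and of the eigenvectors. Passing through crossings requires Rellich's theorem rather than ordering of eigenvalues, and the real analyticity of the Grushin coefficients must be justified at the (discrete) points where $\mu_\pm$ is not simple. I would handle this by defining $\alpha,c$ via the analytic eigenvector branches, not via spectral projectors, so that these coefficients stay analytic on all of $\mathbb R$ and only the statement (2) is restricted to $\gamma\notin D_\pm$.
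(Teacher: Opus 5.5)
Your argument is correct, but it is organized differently from the paper's.

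\textbf{The paper's route.} It takes its analytic structure from a black box, \cite[Proposition 2.1]{DrLy26}. That result supplies analytic constant-rank projector families $\Pi_\pm(\gamma)$ which are spectral projectors off a discrete set. The rank is then pinned at $2$ by evenness of eigenspaces, the small-$\gamma$ expansion and continuity. For (2) the paper works with the Riesz projector at a fixed $\gamma\notin D_\pm$. It uses only $S_1$ (through $S_1H_\gamma(k)S_1^{-1}=H_\gamma(1-k)$) to make the linear term traceless, and gets genericity from analyticity of $\det G_\pm(\gamma)$.

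\textbf{Your route.} You use $S_0$ to split off $\mathcal H_+$, where the doubled level becomes simple for small $\gamma\neq 0$. Rellich--Kato's global theorem on analytic eigen-branches of a self-adjoint holomorphic family with compact resolvent then gives analytic eigenvalues \emph{and} eigenvectors on all of $\mathbb R$, with no external input. ``Multiplicity exactly $2$'' becomes ``simple in $\mathcal H_+$''. The set $D_\pm$ is described concretely as the locally finite crossing set of $\mu_\pm$ with other $\mathcal H_+$-branches, together with the zeros of $\alpha c$. In (2) the full $S_0/S_1$-adapted frame gives the explicit Pauli form of the linear term and $\det A_\pm=\alpha^2c^2$. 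Real analyticity across crossings comes from the eigenvector branches rather than from projectors.

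\textbf{What each buys.} Your route is more self-contained and more explicit about both $A_\pm(\gamma)$ and the exceptional set. The paper's is shorter and does not need a symmetry that reduces the problem to a simple eigenvalue.

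\textbf{Two small points.}
\begin{itemize}
\item For a fixed branch the Grushin basis should be read as $\{\psi^\pm_+,\,S_1\psi^\pm_+\}$. Your ``$\psi_\pm=\psi^\pm_\pm$'' mixes the two branches as written.
\item The leading-order values you cite for $\alpha,c$ were computed in the paper only for the branch attached to the eigenvalue $-\frac{\sqrt2}{2}$ of $\Pi_+V\Pi_+$. For the other branch you need to repeat the computation with the orthogonal vector in $K_{0,+}$. It again gives $|\alpha(0)|=|c(0)|=\frac{\sqrt2}{2}$; the paper also leaves this as a ``similar computation''.
\end{itemize}
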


\begin{proof}
The family 
\[
\gamma\mapsto H_{\gamma}(k_{0})
\]
satisfies the assumption of \cite[Proposition 2.1]{DrLy26}. This gives discrete sets $D_{\pm}\subseteq\mathbb{R}$ and analytic families of orthogonal projections
\[
\Pi_{\pm}(\gamma): L^{2}(\mathbb{R}^{2}/\Gamma)\rightarrow L^{2}(\mathbb{R}^{2}/\Gamma)
\]
of constant rank, such that for $\gamma\notin D_{\pm},$ $\Pi_{\pm}(\gamma)$ is the spectral projector of $H_{\gamma}(k_{0})$ associated to $\mu_{\pm}(\gamma),$ and $\mu_{\pm}(\gamma)$ has constant multiplicity on $\mathbb{R}\backslash D_{\pm}.$

Since every eigenspace of $H_{\gamma}(k_{0})$ is even dimensional, the perturbative approximations at the end of \S \ref{s:Diracc} 
for small $\gamma$ and then the smoothness of $ \Pi_\pm ( \gamma ) $ shows
\[
\mathrm{dim}\,\mathrm{Ran}\,\Pi_{\pm}(\gamma)=2,\ \ \ \gamma\in\mathbb{R}\ \ \ \ \mathrm{and}\ \ \ \ \mathrm{mult}_{H_{\gamma}(k_{0})}(\mu_{\pm}(\gamma))=2, \ \ \ \gamma\notin D_{\pm}.
\]
This proves part (1) of the proposition.

We now fix $\gamma\notin D_{\pm}$ so that $\mu_{\pm}(\gamma)$ is an isolated eigenvalue of $H_{\gamma}(k_{0})$ of multiplicity 2. We choose contour $\Gamma_{\pm}$ around $\mu_{\pm}(\gamma)$ so that
$ 
\Gamma_{\pm}\cap\Spec H_{\gamma}(k_{0})=\emptyset$. 
For $|\kappa|$ sufficiently small, we then have 
$ \Gamma_{\pm}\cap\Spec H_{\gamma}(k_{0}+\kappa)=\emptyset$.
Using this fact we define 
\[
P_{\gamma,\pm}(\kappa):=-\frac{1}{2\pi i}\int_{\Gamma_{\pm}}(H_{\gamma}(k_{0}+\kappa)-z)^{-1}\,\mathrm{d}z,
\]
and choose a local analytic orthonormal basis $u_{1,\pm}(\kappa), u_{2,\pm}(\kappa)$ of $\mathrm{Ran}\,P_{\gamma,\pm}(\kappa).$ (That the rank of the
projection $ P_{\gamma , \pm } ( \kappa ) $ is $ 2 $ follows from the fact that 
$ P_{\gamma, \pm } ( 0) = \Pi_\pm ( \gamma ) $ and the continuity in $ \kappa $.) Let $C_{\pm}(\gamma, \kappa)$ be the $2\times 2$ matrix representing $P_{\gamma,\pm}(\kappa)H_{\gamma}(k_{0}+\kappa)P_{\gamma,\pm}(\kappa)$ defined by
\[
\big(C_{\pm}(\gamma, \kappa)\big)_{ij}:=\Big\langle H_{\gamma}(k_{0}+\kappa)u_{j,\pm}(\kappa), u_{i,\pm}(\kappa)\Big\rangle.
\]
Taylor expansion gives
\[
C_{\pm}(\gamma, \kappa)=\mu_{\pm}(\gamma)I_{\mathbb{C}^{2}}+L_{\pm}(\gamma,\kappa)+\mathcal O(|\kappa|^{2}),
\]
where
\[
L_{\pm}(\gamma,\kappa)=\kappa_{1}\Pi_{\pm}(\gamma)\partial_{k_{1}}H_{\gamma}(k_{0})\Pi_{\pm}(\gamma)+\kappa_{2}\Pi_{\pm}(\gamma)\partial_{k_{2}}H_{\gamma}(k_{0})\Pi_{\pm}(\gamma).
\]
If $l_{\pm}(\gamma,\kappa):=\frac{1}{2}\mathrm{tr}\,L_{\pm}(\gamma,\kappa),$\ $q_{\pm}(\gamma,\kappa):=l_{\pm}(\gamma,\kappa)^{2}-\mathrm{det}\,L_{\pm}(\gamma,\kappa) \geq 0 $ then 
the eigenvalues of $L_{\pm}(\gamma,\kappa)$ are given by
$ 
l_{\pm}(\gamma,\kappa)\pm\sqrt{q_{\pm}(\gamma,\kappa)}. $
Moreover, a direct calculation shows that for $ S_1$ in \eqref{eq:defS},
\begin{equation}\label{S1}
S_{1}H_{\gamma}(k)S_{1}^{-1}=H_{\gamma}(1-k),
\end{equation}
Differentiating \eqref{S1} at $k_{0}=(\frac{1}{2},\frac{1}{2})$ gives
\[
S_{1}\Pi_{\pm}(\gamma)\partial_{k_{j}}H_{\gamma}(k_{0})\Pi_{\pm}(\gamma)S_{1}^{-1}=-\Pi_{\pm}(\gamma)\partial_{k_{j}}H_{\gamma}(k_{0})\Pi_{\pm}(\gamma)\ \ \ j=1,2,
\]
which implies $l_{\pm}(\gamma,\kappa)=0.$ 

To finish the proof of the proposition, it suffices to show the positive definiteness of $q_{\pm}(\gamma,\kappa)$ fails only on a discrete set. 
To see this, we write $q_{\pm}(\gamma,\kappa)=\kappa^{T}G_{\pm}(\gamma)\kappa.$ The function 
$
\gamma\mapsto\det G_{\pm}(\gamma)
$ 
is real analytic on $\mathbb{R}$.  \eqref{diracconesforsmallgamma2}, \eqref{diracconesforsmallgamma3} and similar computations for $E_{3,4}(k, B_{0})$ show that $\det G_{\pm}(\gamma)$ is not identically zero.\end{proof}

\section{Overlaps for higher eigenvalues}
\label{s:overlaps}

In this section we prove Theorem \ref{t:nogap}. 
We let $B=2\pi \ell,$ $l\in\mathbb{Z}$ and $\gamma:=e^{-\frac{\pi \ell}{2}}V_{0}.$
We have \begin{align*}
H_{\gamma}(k)&=(D_{x_{1}}-k_{1})^{2}+(D_{x_{2}}-k_{2})^{2}+\gamma\Big(\cos(x_{1}-2\pi \ell(D_{x_{2}}-k_{2}))+\cos x_{2}\Big)\\&= (D_{x_{1}}-k_{1})^{2}+\gamma\cos(x_{1}+2\pi \ell k_{2})+(D_{x_{2}}-k_{2})^{2}+\gamma\cos x_{2}.
\end{align*}

Thus, \[\Spec (H_{\gamma}(k))=\{E^{\gamma}_{m,n}(k_{1}, k_{2})\, : \,  m,n\in\mathbb{N}_{+}\}=\{\epsilon^{\gamma}_{m}(k_{1})+\epsilon^{\gamma}_{n}(k_{2})\, : \,  m,n\in\mathbb{N}_{+}\},\]
where $\epsilon^{\gamma}_{j}(\kappa)$ denotes the $j$-th eigenvalues of the 1D operator $(D_{t}-\kappa)^{2}+\gamma\cos t.$ 
We let $r\geq 2$ and set 
\begin{equation*}
    A_{r}:=\{(m,n)\in\mathbb{N}^{2}_{+}\, : \,  m+n=r\}.
\end{equation*}

\begin{prop}
\label{overlapsforsmallgamma}
If $|\gamma|$ is sufficiently small, then a subset $S\subseteq\mathbb{N}^{2}_{+}$ satisfying 
\begin{equation}
\label{bandisolation}
    |S|<\infty\ \ \mathrm{and} \ \inf_{k\in\mathbb{R}^{2}/\mathbb{Z}^{2}}\mathrm{dist}\Big(\{E^{\gamma}_{m,n}(k)\, : \,  (m,n)\in S\},\{E^{\gamma}_{m,n}(k)\, : \,  (m,n)\notin S\}\Big)>0.
\end{equation}
implies
\begin{equation}
\label{nogap}
S\subseteq A_{2}\cup A_{3}.
\end{equation}    
\end{prop}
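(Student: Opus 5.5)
The plan is to reduce everything to the one–dimensional bands and to a connectivity argument. For $B=2\pi\ell$ the translation $x_1\mapsto x_1-2\pi\ell k_2$ removes the phase, so $\epsilon^\gamma_m(\cdot)$ depends only on $k_1$ and
\[ E^\gamma_{m,n}(k)=\epsilon^\gamma_m(k_1)+\epsilon^\gamma_n(k_2). \]
Define a graph on $\mathbb N_+^2$ by joining $(m,n)$ and $(m',n')$ whenever $E^\gamma_{m,n}(k)=E^\gamma_{m',n'}(k)$ for some $k$. If $S$ satisfies \eqref{bandisolation}, then no edge can join $S$ to its complement, because such an edge would make the infimum in \eqref{bandisolation} vanish. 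Hence $S$ is a union of connected components. I will show that $\bigcup_{r\ge4}A_r$ lies in a single infinite component. A finite $S$ must then miss it, which gives \eqref{nogap}.

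To produce edges I use the intermediate value theorem on the connected torus. If $f:=E^\gamma_{m,n}-E^\gamma_{m',n'}$ takes both signs, it vanishes somewhere. The estimates come from the free case $\gamma=0$ together with the min–max bound $|\epsilon^\gamma_j(\kappa)-\epsilon^0_j(\kappa)|\le|\gamma|$, which is uniform in $j$ and $\kappa$ since $\|\gamma\cos t\|\le|\gamma|$. This uniformity is what lets a single smallness condition (say $|\gamma|<1/8$) work for all $r$ at once. For $\kappa\in[0,\frac12]$ (the rest follows by $\kappa\mapsto-\kappa$), the free eigenvalues are $\epsilon^0_j=d_j^2$ with
\[ d_1=\kappa,\quad d_2=1-\kappa,\quad d_3=1+\kappa,\quad d_4=2-\kappa,\ \dots \]
From this I read off two ranges:
\[ \epsilon^0_{j+1}-\epsilon^0_j \ \text{ranges over } [0,j], \]
with $0$ at the touching point and $j$ at the other endpoint, and
\[ \epsilon^0_{j+2}-\epsilon^0_j=2d_j+1 \ \text{ranges over } [j,j+1]. \]

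\emph{Edges within $A_r$.} Take $(m,n)$ and $(m+1,n-1)$ with $n\ge2$. Then
\[ f=-(\epsilon_{m+1}-\epsilon_m)(k_1)+(\epsilon_n-\epsilon_{n-1})(k_2). \]
Choosing $k_1$ at the touching point and $k_2$ at the maximizing endpoint gives $f\ge n-1-4|\gamma|>0$. The reverse choice gives $f\le -m+4|\gamma|<0$. So each $A_r$ is connected.

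\emph{Edges from $A_r$ to $A_{r+1}$.} Take $(m,n)$ and $(m+2,n-1)$. Then
\[ f=-(\epsilon_{m+2}-\epsilon_m)(k_1)+(\epsilon_n-\epsilon_{n-1})(k_2), \]
whose free values fill roughly $[-(m+1),\,n-1-m]$. A sign change with margin $1-4|\gamma|$ is therefore guaranteed once $n\ge m+2$. Taking $m=1$, $n=r-1$ requires $r-1\ge3$, i.e. $r\ge4$. This joins $A_r$ to $A_{r+1}$ for every $r\ge4$, so $\bigcup_{r\ge4}A_r$ is one infinite component. It is consistent with the statement that the construction fails exactly for $(1,2)$, so $A_2\cup A_3$ may remain separated.

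The main obstacle is getting the perturbation bound uniform in the band index, since higher gaps are much smaller ($\mathcal O(\gamma^j)$) and generic perturbation theory degrades with $j$. The plan avoids this: it never uses the size of the gaps, only the $\mathcal O(1)$ spread of the differences, and the operator-norm bound $|\gamma|$ on $\epsilon_j$ handles all $j$ simultaneously. The remaining care is the bookkeeping at $\kappa=0,\frac12$, where the ordering $d_j$ switches and one must check that the extremes above are attained at the claimed endpoints.
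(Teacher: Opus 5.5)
Your proposal is correct and follows essentially the same route as the paper's proof. Both arguments use a sign change of $E^\gamma_{m,n}-E^\gamma_{m',n'}$ built from the free differences $\epsilon^0_{j+1}-\epsilon^0_j\in[0,j]$. Both use the same links: $(m,n)\sim(m+1,n-1)$ inside $A_r$, and $(1,r-1)\sim(3,r-2)$ from $A_r$ to $A_{r+1}$ for $r\ge4$, with the same free values $r-3$ and $-1$. Both also rely on the same index-uniform bound $|\epsilon^\gamma_j-\epsilon^0_j|\le|\gamma|$. Your connected-component formulation simply makes explicit the step the paper summarizes as ``an infinite chain of intersections.''
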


\begin{proof}
If $0\leq \kappa\leq\frac{1}{2},$ then
\begin{equation*}
    \epsilon^{0}_{2a-1}(\kappa)=(a-1+\kappa)^{2},\ \ \ \epsilon_{2a}^{0}(\kappa)=(a-\kappa)^{2},\ \ a\in\mathbb{N}_{+}.
\end{equation*}
If we define $d_{j}(\kappa):=\epsilon^{0}_{j+1}(\kappa)-\epsilon^{0}_{j}(\kappa),$ then
\begin{equation*}
    d_{2a-1}(\kappa)=(2a-1)(1-2\kappa),\ \ d_{2a}(\kappa)=4a\kappa.
\end{equation*}
Hence, if we set 
\begin{equation}\label{sigma}
    \sigma_{j}:=
\left\{
\begin{array}{l}
0,\ j\ \mathrm{odd,} \\
\frac{1}{2},\ j\ \mathrm{even,}
\end{array}
\right.
\end{equation}
then
\begin{equation}
\label{dj}
    d_{j}(\sigma_{j})=j,\ \ d_{j}(\tfrac{1}{2}-\sigma_{j})=0.
\end{equation}

Suppose $m\geq 2$. Then\begin{equation*}
    \begin{split}
        E^{0}_{m,n}(k_{1},k_{2})-E^{0}_{m-1,n+1}(k_{1}, k_{2})&=(\epsilon^{0}_{m}(k_{1})-\epsilon^{0}_{m-1}(k_{1}))-(\epsilon_{n+1}^{0}(k_{2})-\epsilon^{0}_{n}(k_{2}))\\&=d_{m-1}(k_{1})-d_{n}(k_{2}).
    \end{split}
\end{equation*}
We have by \eqref{dj}
\begin{equation}
\label{bandswithsame m+n}
\begin{split}
E^{0}_{m,n}(\sigma_{m-1},\tfrac{1}{2}-\sigma_{n})-E^{0}_{m-1, n+1}(\sigma_{m-1},\tfrac{1}{2}-\sigma_{n})=m-1>0,\\E^{0}_{m,n}(\tfrac{1}{2}-\sigma_{m-1},\sigma_{n})-E^{0}_{m-1,n+1}(\tfrac{1}{2}-\sigma_{m-1},\sigma_{n})=-n<0.    
\end{split}
\end{equation}
Therefore, for $\gamma=0$, if $S$ satisfies condition \eqref{bandisolation} and contains one element of $A_{r},$ then $A_{r}\subseteq S.$

Now suppose $r\geq 4.$ Consider $(1,r-1)\in A_{r}$ and $(3,r-2)\in A_{r+1}.$ For $k_{1}=0,$ \begin{equation*}
    \begin{split}
        E^{0}_{1,r-1}(0,\kappa)-E^{0}_{3,r-2}(0,\kappa)&=(\epsilon^{0}_{r-1}(\kappa)-\epsilon^{0}_{r-2}(\kappa))-(\epsilon^{0}_{3}(0)-\epsilon_{1}^{0}(0))\\&=d_{r-2}(\kappa)-1.
    \end{split}
\end{equation*} Using \eqref{sigma}, we have
\begin{equation}
\label{bandswithdifferent m+n}
   \begin{split}
        E^{0}_{1,r-1}(0,\sigma_{r-2})-E^{0}_{3,r-2}(0,\sigma_{r-2})=r-3>0,
        \\
        E^{0}_{1,r-1}(0, \tfrac{1}{2}-\sigma_{r-2})-E^{0}_{3,r-2}(0,\tfrac{1}{2}-\sigma_{r-2})=-1<0.
   \end{split}
\end{equation}
This together with \eqref{bandswithsame m+n} shows that every $E^{0}_{m,n}$ with $m+n\geq 4$ belongs to an infinite chain of intersections. The estimate
\begin{equation}
\label{separation}
    \sup_{\kappa\in\mathbb{R}/\mathbb{Z}}|\epsilon^{\gamma}_{j}-\epsilon^{0}_{j}|\leq|\gamma|,\ \ j\in\mathbb{N}_{+}
\end{equation}
shows the same conclusion remains true for $|\gamma|$ sufficiently small. 
\end{proof}

\begin{prop}
\label{overlap}
If $\gamma\in\mathbb{R}$ is fixed and 
\[ 
m+n\geq 8\lceil4|\gamma|\rceil+12, 
\]
then every $E^{\gamma}_{m,n}$ belongs to an infinite chain of intersections.
That is, there is an infinite sequence $ ( m_j , n_j ) $ such that
 $ (m,n) = (m_0, n_0) $, $(m_{j},n_{j})\neq (m_{k}, n_{k})$ if $j\neq k$, and 
 \[ 
 \forall \, j \in \mathbb N, \ \ \ E^{\gamma}_{m_j,n_j} (k) = E^{\gamma}_{m_{j+1},n_{j+1}} (k)
 \]
 for some $k\in\mathbb{R}^{2}/\mathbb{Z}^{2}.$
\end{prop}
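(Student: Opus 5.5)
The plan is to reduce everything to the $\gamma=0$ picture used in the proof of Proposition \ref{overlapsforsmallgamma}. The only input about $\gamma\neq 0$ will be the uniform bound \eqref{separation}, $|\epsilon^\gamma_j-\epsilon^0_j|\le|\gamma|$. Together with the continuity of $k\mapsto E^\gamma_{m,n}(k)$ on the connected torus, this bound gives the following crossing criterion.

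\emph{Crossing criterion.} If for two labels $(m,n)\neq(m',n')$ there are points $k,k'$ with
\[
E^0_{m,n}(k)-E^0_{m',n'}(k)>4|\gamma| \quad\text{and}\quad E^0_{m,n}(k')-E^0_{m',n'}(k')<-4|\gamma|,
\]
then $E^\gamma_{m,n}-E^\gamma_{m',n'}$ changes sign along a path from $k$ to $k'$. It therefore vanishes somewhere, so the two bands intersect. Each difference involves four one-dimensional eigenvalues, which is where the constant $4|\gamma|$ comes from. Set $c_\gamma:=\lceil 4|\gamma|\rceil$.

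I would use two kinds of links.

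\emph{Within-level link} $(m,n)\to(m-1,n+1)$. By \eqref{bandswithsame m+n}, the $\gamma=0$ difference takes the values $m-1$ and $-n$. The criterion therefore applies as soon as $m-1>4|\gamma|$ and $n>4|\gamma|$.

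\emph{Cross-level link} $(m,n)\to(m+2j,n-1)$, generalizing \eqref{bandswithdifferent m+n}. The $\gamma=0$ difference is
\[
d_{n-1}(k_2)-\bigl(\epsilon^0_{m+2j}-\epsilon^0_m\bigr)(k_1).
\]
Evaluate it at $k_1=0$ if $m$ is odd and $k_1=\tfrac12$ if $m$ is even. Then $\epsilon^0_{m+2j}-\epsilon^0_m$ equals an explicit integer $T_{m,j}$, which is a difference of two squares. I would choose $j\ge1$ minimal with $T_{m,j}>4|\gamma|$. This gives $T_{m,j}\le C(c_\gamma+m)$, and for bounded $m$ the bound is $O(c_\gamma)$. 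Letting $k_2$ run over $\sigma_{n-1}$ and $\tfrac12-\sigma_{n-1}$, the function $d_{n-1}$ takes the values $n-1$ and $0$ by \eqref{dj}. So the difference takes values $n-1-T_{m,j}$ and $-T_{m,j}$. The criterion holds once $n-1>T_{m,j}+4|\gamma|$. The symmetric link $(m,n)\to(m-1,n+2j)$ is handled the same way with the roles of $k_1$ and $k_2$ exchanged.

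\emph{Building the chain.} Fix the anchor column $c:=c_\gamma+2$, so that $c-1>4|\gamma|$. Start from $(m,n)$ with $r=m+n\ge 8c_\gamma+12$; at least one of $m,n$ is then large.
\begin{enumerate}
\item[(i)] If $m\le 4|\gamma|+1$, first apply a cross-level link in the first coordinate to move to a column $m'$ with $c\le m'\le O(c_\gamma)$. This is legitimate because $n$ is large. Symmetrically, if $n$ is small, use the symmetric link to raise $n$.
\item[(ii)] Walk with within-level links along $A_{r'}$ to $(c,r'-c)$. Every intermediate label has both coordinates greater than $4|\gamma|$, so each step is valid.
\item[(iii)] Iterate the cycle $(c,N)\to(c+2,N-1)\to(c+1,N)\to(c,N+1)$. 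The first step is a cross-level link with $j=1$; for it I need $T_{c,1}>4|\gamma|$, and if that fails I would enlarge $c$ slightly. The other two steps are within-level links. Each cycle raises the level by one.
\end{enumerate}

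\emph{Distinctness.} The labels visited in (iii) are pairwise distinct: those of the form $(c,\cdot)$ have strictly increasing second coordinate, and the same holds separately for $(c+1,\cdot)$ and $(c+2,\cdot)$. They also differ from the finitely many labels used in (i)–(ii). One minor adjustment is needed: the label $(c,r'-c)$ reached in (ii) must not be revisited at the start of (iii). This is a matter of starting the cycle at $N=r'-c$ and checking that the first cycle moves strictly forward, which it does.

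The main obstacle is the bookkeeping in (i)–(iii). Every link used needs both coordinates, and the intermediate gap sizes $T_{m,j}$, to exceed $4|\gamma|$ with margin. I have to verify that the crude bounds $T_{m,j}=O(c_\gamma+m)$ and $c=O(c_\gamma)$ fit inside the hypothesis $m+n\ge 8\lceil4|\gamma|\rceil+12$. The case analysis for small $m$ (or small $n$), where a single within-level move is illegal, is where the constant $8$ and the additive $12$ should be consumed. I would first verify the explicit formula for $T_{m,j}$ via $\epsilon^0_{2a-1}(0)=(a-1)^2$ and $\epsilon^0_{2a}(\tfrac12)=(a-\tfrac12)^2$, and only then tune $c$.
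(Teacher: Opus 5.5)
Your crossing criterion, the within-level link (the paper's \eqref{5.3.1}) and the cycle $(c,N)\to(c+2,N-1)\to(c+1,N)\to(c,N+1)$ are the paper's own ingredients. The paper uses the odd column $H=2\lceil4|\gamma|\rceil+3$ and $k_1=0$. Also, $T_{m,j}=j(m+j-1)$ in both parities, so $T_{c,1}=c>4|\gamma|$ and $c$ never needs enlarging.

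The gap is step (i): a single cross-level link cannot do what you ask of it.
\begin{itemize}
\item With $j$ minimal you do get $T_{m,j}=O(c_\gamma+m)$. But $j\le\sqrt{4|\gamma|}+1$, so you land in column $m+2j=m+O(\sqrt{c_\gamma})$. For small $m$ and large $|\gamma|$ this column is still $\le 4|\gamma|$, so neither within-level link is available there.
\item If instead you take $j$ large enough that $m+2j\ge c$, then $T_{m,j}\approx c^2/4$ is quadratic in $c_\gamma$. The requirement $n-1>T_{m,j}+4|\gamma|$ then no longer follows from the linear hypothesis.
\end{itemize}
For example, take $|\gamma|=10$, so $c=42$, and the hypothesis admits $(m,n)=(1,331)$. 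The minimal $j$ is $7$, which lands in column $15\le 40$, where no within-level link applies. Reaching column $42$ in one link needs $j=21$, giving $T=441>330-40$.

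Within your approach, the repair is to iterate (i): apply minimal-$j$ links until the column exceeds $4|\gamma|+1$.
\begin{itemize}
\item Each link lowers $n$ by exactly one.
\item Each link raises the level by $2j-1\ge1$. So the labels produced in (i) sit on strictly increasing levels, all below $r'$ except the last, which settles distinctness.
\item Each link satisfies $T_{m,j}\le 4|\gamma|+m+2j-2$, because $T_{m,j}-T_{m,j-1}=m+2j-2$.
\end{itemize}
There are at most $c_\gamma/2+1$ links, and $n\ge 7c_\gamma+11$ at the start. Hence $n-1>T_{m,j}+4|\gamma|$ holds throughout, with room to spare.

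The paper avoids this bookkeeping. For $m\le K$ it uses the crude bounds $\big(\tfrac{j-1}{2}\big)^2\le\epsilon^0_j\le\big(\tfrac{j}{2}\big)^2$ to get $E^\gamma_{H,r-H}<E^\gamma_{m,n}<E^\gamma_{H,r-H+s}$ everywhere, for $s$ large. The union of the graphs along the chain $(H,r-H)\sim\cdots\sim(H,r-H+s)$ is connected, so $E^\gamma_{m,n}$ must meet one of its members. This sandwich step is where the hypothesis $m+n\ge 4H=8\lceil4|\gamma|\rceil+12$ is consumed. Your iterated links give explicit crossing points instead, at the price of the counting above.
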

\begin{proof}
    We use the notation
    \[
    (m,n)\sim(m',n')
    \]
    to mean that $E^{\gamma}_{m,n}$ and $E^{\gamma}_{m',n'}$ intersect somewhere on $\mathbb{R}^{2}/\mathbb{Z}^{2}.$ We let $K:=\lceil 4|\gamma|\rceil+1$, $H:=2K+1.$
    \ 

    We claim that 
    \begin{equation}
    \label{5.3.1}
    (m,n)\sim (m-1, n+1)\ \mathrm{whenever\ }\min (m-1, n)\geq K.
    \end{equation}
    \ 

    Indeed, \eqref{bandswithsame m+n} and \eqref{separation} give
    \[
 \begin{split}
    E^{\gamma}_{m,n}(\sigma_{m-1}, \tfrac{1}{2}-\sigma_{n})-E^{\gamma}_{m-1, n+1}(\sigma_{m-1}, \tfrac{1}{2}-\sigma_{n})\geq (m-1)-4|\gamma|>0,
     \\
     E^{\gamma}_{m,n}(\tfrac{1}{2}-\sigma_{m-1}, \sigma_{n})-E^{\gamma}_{m-1, n+1}(\tfrac{1}{2}-\sigma_{m-1}, \sigma_{n})\leq 4|\gamma|-n<0.
 \end{split}   
    \]
    Thus, the two bands intersect.

    We now show every $E^{\gamma}_{H, N}$ with $N\geq K+H+1$ belongs to an infinite chain of intersections. We compute
    \[
    \begin{split}
    E^{\gamma}_{H, N}(0,\sigma_{N-1})-E^{\gamma}_{H+2,N-1}(0,\sigma_{N-1})\geq N-1-H-4|\gamma|>0,
    \\
    E^{\gamma}_{H, N}(0, \tfrac{1}{2}-\sigma_{N-1})-E^{\gamma}_{H+2,N-1}(0, \tfrac{1}{2}-\sigma_{N-1})\leq 4|\gamma|-H<0.
    \end{split}
    \]
    This shows $(H, N)\sim (H+2, N-1).$ Using \eqref{5.3.1}, we obtain an infinite chain
    \[
    (H, N)\sim (H+2, N-1)\sim (H+1, N)\sim (H, N+1)\sim\cdots\sim(H, N+2)\sim\cdots.
    \]
It remains to connect every $(m,n)$ with $r:=m+n\geq 4H$ to a pair of the form $(H, N).$ 
   
   If  $\min(m, n)\geq K,$ then \eqref{5.3.1} shows
   \[
   (m,n)\sim\cdots\sim(H, m+n-H).
   \]
   Now suppose $1\leq m\leq K.$ We compute
   \[
   \begin{split}
  E^{0}_{m,n}(k_{1}, k_{2})-E^{0}_{H,r-H}(k_{1}, k_{2})&=(\epsilon^{0}_{n}(k_{2})-\epsilon^{0}_{r-H}(k_{2}))-(\epsilon^{0}_{H}(k_{1})-\epsilon_{m}^{0}(k_{1}))\\&\geq\frac{(r-K-1)^{2}-(r-H)^{2}}{4}-\frac{H^{2}}{4}\geq K.     
   \end{split}
   \]
   Hence,
   \[
E^{\gamma}_{m,n}(k_{1}, k_{2})-E^{\gamma}_{H, r-H}(k_{1}, k_{2})\geq K-4|\gamma|>0.  
   \]
  For $s$ sufficiently large, we have $E^{\gamma}_{H, r-H+s}> E^{\gamma}_{m,n}>E^{\gamma}_{H,r-H}.$ Thus,  $E^{\gamma}_{m,n}$ intersects
  \[
  (H, r-H)\sim\cdots\sim(H, r-H+s).
  \]The case $1\leq n\leq K$ follows by symmetry. 
\end{proof}
Proposition \ref{overlap} shows if $
S\subseteq\mathbb{N}^{2}_{+}$ satisfies \eqref{bandisolation}, then
\[
S\subseteq \bigcup_{r=2}^{4H-1}A_{r}.
\]
We then obtain
\[
|S|\leq\sum_{r=2}^{8K+3}(r-1)\leq 32K^{2}+20K+3= \mathcal O (1+|\gamma|^{2}),
\] which completes the proof of Theorem \ref{t:nogap}.

\section{Chern numbers at $ B = 2 \pi \ell $.}
\label{s:chern}

We will now consider the case of $ B = 2 \pi \ell $ and compute the Chern number of the line bundle of 
Bloch--Floquet eigenfunctions corresponding to the lowest (isolated) band.   It is given by $-\ell $ and agrees with the result presented in \cite[\S III]{tri}. Our argument is similar to the 
computation of the Chern number in Thouless pumping \cite[\S 9.1]{notes}.

\subsection{The first band}
\label{s:61}

We consider the {\em smallest} Bloch eigenvalue,  $ \varepsilon ( \kappa, \gamma ) $, for the one dimensional 
potential $ \gamma \cos t $. From \cite[Theorem 6, (10.1)]{notes} we know that there exists an eigenstate
$ u \in C^\infty ( \mathbb R) $ such that
\begin{equation}
\label{eq:defu} \begin{gathered}
 ( ( D_t - \kappa ) ^2 +
 \gamma \cos t ) u ( \kappa , t ) = \varepsilon ( \kappa, \gamma ) u ( \kappa , t ) , \\ 
u ( \kappa +  p , t + 2\pi \ell ) = e^{ i p t } u ( \kappa , t ) , \ \  
p, \ell \in \mathbb Z , \ \ \ \int_0^{2 \pi}  | u ( \kappa , t  ) |^2 dt = 1.   \end{gathered} \end{equation}
Expressing the operator using the Fourier expansion in $ x_2 $ we see that 
\[ \cos ( x_1- 2 \pi \ell ( D_{x_2} -k _2 ) ) = \cos ( x_1 + 2 \pi \ell k_2 ) .  \]
It then follows from \eqref{eq:effH} that
\[ E_1 ( k , 2 \pi \ell ) =  \varepsilon ( k_1 , e^{ - \frac{\pi\ell}{2} } V_0 ) + \varepsilon ( k_2 , e^{ - \frac{\pi\ell}{2} } V_0 ),\]
with the eigenfunction given by 
\begin{equation}
\label{eq:defwk} w ( k, x ) = u ( k_1, x_1 + 2 \pi \ell k_2 ) u ( k_2, x_2 ) ,  \ \ x = ( x_1, x_2 ) , \ \ k = (k_1, k_2 ). \end{equation}
The Chern number of the Bloch--Floquet line bundle, $ L $,  (see \cite[(9.2)]{notes} and \eqref{eq:defL} below) is given by \cite[(9.7)]{notes}:
\begin{equation}
\label{eq:chern1} c_1 ( L ) = \frac{ i } { 2 \pi} \int_{ \mathbb R^2/\Gamma^* } d\eta , 
\ \ \ \eta = \langle d_k w(k,\bullet) , w(k,\bullet) \rangle_{ L^2 ( \mathbb R^2/ \Gamma )} . 
\end{equation}
The curvature $ \Theta := d \eta $ is a well defined 2-form, the 1-form
$ \eta $ is defined on $ F:= [0,1) \times [0, 1)$, a fundamental domain of 
$ \Gamma^* = \mathbb Z^2 $. That gives, 
\[ c_1 ( L ) = \frac{ i } { 2 \pi } \int_{\partial F } \eta ,\]
where the boundary of the square, $ \partial F $ is oriented counter-clockwise. We have
\[ \begin{gathered}  \partial_{k_1 } w ( k , x ) = 
\partial_{\kappa} u ( k_1,  x_1 + 2 \pi \ell k_2 ) u ( k_2, x_2 ) , \\ 
 \partial_{k_2 } w ( k , x ) = 2 \pi \ell  \partial_{t} u  ( k_1,  x_1 + 2 \pi \ell k_2 ) u ( k_2, x_2 )
 + u ( k_1,  x_1 + 2 \pi \ell k_2 ) \partial_{\kappa } u ( k_2, x_2 ). 
 \end{gathered} 
 \]
 Only the first term in $ \partial_{k_2} w $ contributes to \eqref{eq:chern1} as the other term and $ \partial_{k_1} w $ give the Chern number of the trivial bundle $ k \mapsto \mathbb C u ( k_1, x_1 ) u (k_2, x_2 ) $ (which is simple to verify directly). 
Hence, 
\[ \begin{split} c_1 ( L ) & = i \ell \int_{\partial F } \left( \int_{0}^{2 \pi } 
\partial_t u ( k_1, t ) \overline{ u ( k_1, t )} dt \right)  dk_2  \\
& = i \ell \int_0^{2 \pi } \left(
\partial_t u ( 1, t ) \overline{ u ( 1, t )} - \partial_t u ( 0 , t ) 
\overline{ u (0, t ) } \right) dt .
\end{split} \]
We now use \eqref{eq:defu} to see that
\[    u ( 1 , t ) = e^{ i t } u ( 0 , t ) , \ \ \
\partial_t u ( 1, t ) = \partial_t ( e^{ i t } u ( 0 , t ) ) = 
i e^{ it } u ( 0, t ) + e^{it } \partial_t u ( 0, t ) . \]
Since $ \int_0^{2 \pi } | u ( 0 , t ) |^2 dt = 1$ we conclude that
\begin{equation}
\label{eq:Chern}
c_1 ( L ) =  -\ell .
\end{equation}

\subsection{Rank-2 bundle corresponding to $ E_2 ( k , 2 \pi \ell ) $ and $ E_3 ( k , 2 \pi \ell ) $.}

The computations in \cite[Example 13]{notes} show that for $\gamma>0$ sufficiently small, $A_{2}, A_{3}$ satisfy condition \eqref{bandisolation}. We therefore denote $V \rightarrow\mathbb{R}^{2}/\mathbb{Z}^{2}$ the Bloch bundle associated to bands $\{E_{1,2}(k),E_{2,1}(k)\}$. The Chern number of $V $ is given by
\begin{equation*}
    \begin{split}
        c_{1}(V)&:=\frac{i}{2\pi}\int_{\mathbb{R}^{2}/\mathbb{Z}^{2}}\mathrm{tr}\,\Theta =\frac{1}{\pi}\int_{\mathbb{R}^{2}/\mathbb{Z}^{2}}\mathrm{Im}\langle\partial_{k_{1}}w_{1,2}(k,\bullet),\partial_{k_{2}}w_{1,2}(k,\bullet)\rangle \\
        & \hspace{4cm} +\mathrm{Im}\langle \partial_{k_{1}}w_{2, 1}(k,\bullet),\partial_{k_{2}}w_{2,1}(k,\bullet)\rangle\,{d}k_{1} {d}k_{2}.
    \end{split}
\end{equation*}

We have, for each $m,n$
\[\begin{split}
    \langle\partial_{k_{1}}w_{m,n},\partial_{k_{2}}w_{m,n}\rangle&= 2\pi \ell\langle\partial_{\kappa}u_{m},\partial_{t}u_{m}\rangle_{L^{2}_{x_{1}}}+\langle\partial_{\kappa} u_{m}, u_{m}\rangle_{L^{2}_{x_{1}}}\langle u_{n},\partial_{\kappa}u_{n}\rangle_{L^{2}_{x_{2}}},
\end{split}\]and 
\eqref{eq:defu} shows that the second term is 
real-valued. Computing this as 
in the previous section gives $ c_1 ( V) = -2 \ell $. 

\subsection{Comparison with the parent bundle of \cite{tri}}
\label{s:63}
Here we discuss the computation in \cite[\S II.B]{tri} and the relation to 
\S \ref{s:chern}. We use the notation of that section and 
\S \ref{s:effH} and the line-bundle convention of \cite[\S 9.1, (9.5)--(9.7)]{notes}. 
In particular, for a local normalized
frame $\Phi(k)$ of a Hermitian line bundle  write the connection and curvature as 
\[
\eta :=\langle d_k\Phi,\Phi\rangle,
\qquad
\Theta:=d\eta.
\]
Thus
\begin{equation}
    \label{eq:defBk}
 \begin{gathered}
\Theta 
= - i B ( k ) dk_1 \wedge dk_2 , \ \ \
B ( k ) := 2 \Im\langle \partial_{k_1}\Phi,\partial_{k_2}\Phi\rangle,
\\
c_1(L)=\frac{i}{2\pi}\int_{\mathbb R^2/\mathbb Z^2}\Theta 
=
\frac1 { 2\pi} \int_{\mathbb R^2/\mathbb Z^2} B ( k ) dk_1dk_2,
\end{gathered} \end{equation}
where we choose the standard orientation of the torus. We recall that although $ \eta $ is locally defined, 
the curvature 
$ \Theta $ is a 2-form on $ \mathbb R^2/\mathbb Z^2$.

  The conjugation \eqref{eq:HamU} shows that the spectrum of \eqref{eq:Ham} with $ V \equiv 0 $ is given by 
 the union of  $ [ 0 , \infty ) + J m $, $ m = 0, 1, \cdots $. As $ J \to \infty $, the bottom of the spectrum is given by a {\em parent band}, $ |k|^2 $, with generalised eigenfunctions of $ \widetilde H_J ( k ) $ in 
 \eqref{eq:HamU} given by $ ( 2 \pi )^{-1} 
 e^{ i \langle k, x \rangle } \psi_0 ( w ) $, $ k  \in \mathbb R $. After the conjugation \eqref{eq:Hk}, the generalised eigenfunctions are given by $ U_k ( (2 \pi)^{-1} \otimes \psi_0 ) $. 
This suggests the following 
definition of the {\em parent line bundle} which we will compare below to the bundle implicit in 
\cite{tri}. For $k=(k_1,k_2)\in\mathbb R^2$ we use the symmetric unitary transformation \eqref{eq:defUkt} to define
\[
\Phi_0(k):= \frac{1}{ 2 \pi } \widetilde U_k(1\otimes \psi_0),
\qquad
\Phi_0(k,x,w)= 
(2\pi)^{-1}
e^{ -i\sqrt{B}\,k_1 w - \tfrac{iB}{2}k_1 k_2} 
\psi_0 (w + \sqrt{B}\,k_2).
\]
We also note that with the $ L^2 (\mathbb T^2_x \times\mathbb R_w ) $
inner product, 
\begin{equation}  
\label{eq:ff} \langle \Phi_0 ( k ) , \Phi_0 ( k' ) \rangle= \exp \left(
-\tfrac14  {B} \left( | k - k'|^2 
+ 2 i \sigma ( k, k' )  \right) \right), \end{equation}
where $ \sigma ( k , k' ) :=  k_1' k_2 - k_1 k_2' $. This is the helicity for which the overlap phase, and hence the sign of the parent Berry curvature, agrees directly with the parent-state convention of \cite[\S II.B]{tri}. 
From the explicit formula for $\Phi_0 (k) $ we obtain in the notation 
of \eqref{eq:defBk}, 
\begin{equation}
\label{eq:etap}
\begin{gathered}
\eta_{\rm par}=\langle d_k\Phi_0,\Phi_0\rangle=-\tfrac{iB}{2}(k_{1}dk_{2}-k_2\,dk_1),
\\
\Theta_{\rm par}=-iB\,dk_1\wedge dk_2,  \ \ \ \ B ( k ) = B .
\end{gathered}
\end{equation}
This is well defined connection on the (trivial) {\em parent} line bundle over 
$ \mathbb R^2$, $ k \mapsto \Phi_0 ( k ) $ but it does {\em not} descend to a connection on the dual torus. 

To define a line bundle for which the connection descends for special values of $ B $ we use {\em magnetic translations}. For $n=(n_1,n_2)\in\mathbb Z^2$ we define them by
\begin{equation}
\label{eq:Tmn-parent-newer}
(T_nu)(x,w):=e^{-i\sqrt B\,n_1w}u(x,w+\sqrt B\,n_2).
\end{equation}
Then
\begin{equation}
\label{eq:Phi0-equivariance-newer}
\begin{gathered}
\Phi_0(k+n)=g_n(k)T_n\Phi_0(k), \qquad
g_n(k):=e^{\frac{iB}{2}(k_{1}n_{2}-k_{2}n_{1}-n_{1}n_{2})},
\\
T_nT_{n'}=e^{-iBn_1'n_2}T_{n+n'} = e^{-iB( n_1'n_2 - n_1 n_2') } T_{n'}T_n .
\end{gathered}
\end{equation}
Hence for
\[
B=2\pi\ell,
\qquad \ell\in\mathbb Z,
\]
the projective multiplier in the second line of
\eqref{eq:Phi0-equivariance-newer} is trivial, so that
$T_nT_{n'}=T_{n+n'}$.  Hence, the lines
$ \widetilde P_k:=\mathbb C\Phi_0(k)$,
$k\in\mathbb R^2/\mathbb Z^2$,
 define a Hermitian line bundle
$P\to\mathbb R^2/\mathbb Z^2$, as in \cite[(9.2)]{notes}:
\begin{equation}
\label{eq:defP} \begin{gathered}  P := \{ ( k ,v ) : k \in \mathbb R^2 , v \in \mathbb C \Phi_0 ( k ) \}/\sim , \\ ( k, v ) \sim ( k', v' ) \ \Longleftrightarrow 
\ \exists \, n \in \mathbb Z^2 , \ k' = k + n, \ v' = T_n v . 
\end{gathered} \end{equation}
The class $[(k,\Phi_0(k))]$ does not define a section of $P$: 
from \eqref{eq:Phi0-equivariance-newer} and 
\eqref{eq:defP},
\[
\begin{split}
[(k+n,\Phi_0(k+n))]
&=g_n(k)[(k+n,T_n\Phi_0(k))]\\
&=g_n(k)[(k,\Phi_0(k))].
\end{split}
\]
Thus $g_n(k)$ are the transition functions of the local frame
$\Phi_0$.  Moreover,
\[
\eta_{\rm par}(k+n)-\eta_{\rm par}(k)
 =-\tfrac12 {iB} (n_1\,dk_2-n_2\,dk_1)
 =g_n(k)^{-1}d g_n(k),
\]
which is precisely the transformation law for a unitary connection.
Consequently, $\eta_{\rm par}$ does define a connection on $P$, with
curvature $\Theta_{\rm par}$, and
\[
c_1(P)=\frac{i}{2\pi}
\int_{\mathbb R^2/\mathbb Z^2}\Theta_{\rm par}
=\frac{B}{2\pi}=\ell.
\]

\begin{figure}
\includegraphics[width=16cm]{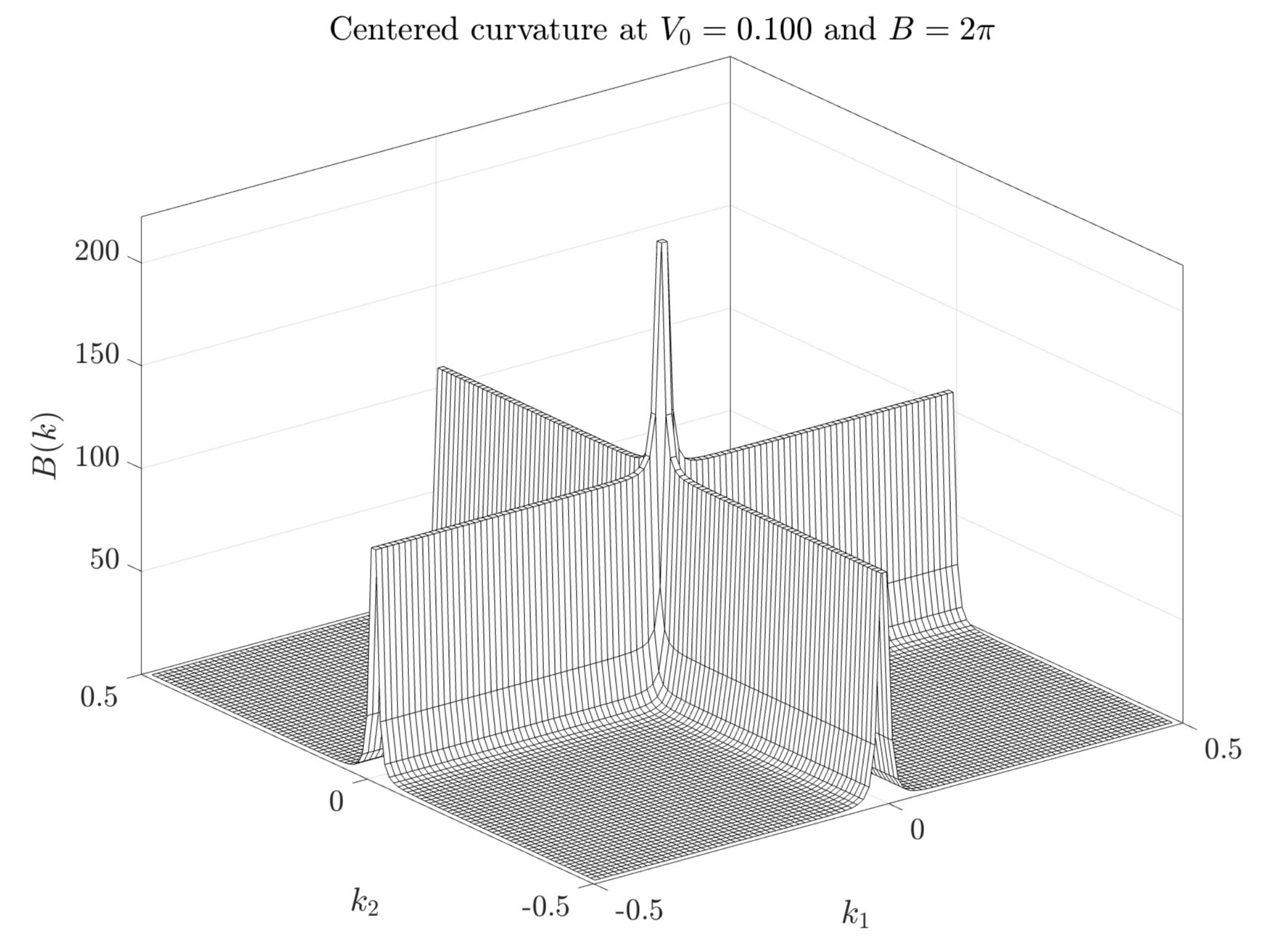}
\caption{\label{f:curv} 
The curvature, $ B ( k ) $, of the line bundle $ L $ in \eqref{eq:defL} for $ B = 2 \pi $ and $ V_0 = 0.1 $ -- the sign is reversed as the 
computation was done for the opposite helicity (that is with $ D_{\bar z } $ in place of $ D_{z } $ in \eqref{eq:Ham}). We use
the symmetric gauge here: \eqref{eq:effHt} with $ \theta = \frac12$
illustrating \eqref{eq:cc}. For an animated version see \url{https://math.berkeley.edu/~zworski/curv_movie.mp4}.}
\end{figure}

We next compare this with the first band of the effective Hamiltonian. 
In the notation of \eqref{eq:defwk} we put
\[
\widetilde L_k:=\mathbb Cw(k,\bullet)\subset L^2(\mathbb T^2_x),
\qquad
\Phi(k):=\widetilde U_k(w(k,\bullet)\otimes\psi_0).
\]
The line bundle considered in \S \ref{s:61} is given by 
\begin{equation}
\label{eq:defL} 
\begin{gathered} 
L := \left\{ ( k , v ) : k \in \mathbb R^2, v \in \mathbb C w(k,\bullet) \right\}/ \sim ,  \\
( k , v ) \sim ( k', v' ) \ \Longleftrightarrow \ 
\exists \, n \in \mathbb Z^2 , \ k' = k +n , \ v' (x )  = e^{i \langle  x, n \rangle } v ( x ) .
\end{gathered}
\end{equation}
(See \cite[Lemma 9.1]{notes} for a proof that this defines a line bundle.) 
This is  the genuine spectral line bundle over the Brillouin torus, while $\Phi(k)$ is its lift to the $U_k$-gauge over $ \mathbb R^2_k$.
The families $\Phi_0(k)$ and $\Phi(k)$ should be
compared on $\mathbb R^2$, while the line bundle obtained from the
Bloch descent on the effective side is $L$.

An additional point should be emphasized when comparing the bundles defined by $ \Phi_0 ( k) $ and $ \Phi ( k ) $. The eigenfunctions 
$ w ( k , x ) $ {\em cannot} be chosen so that $ w ( k ,x ) \to 
( 2 \pi)^{-1} $ as $ V_0 \to 0 $. The Bloch periodicity condition 
prevents that from happening. For the symmetric gauge 
($ \theta = \frac12 $ in \eqref{eq:effHt}), the curvature of $ L $, $ B_{V_0 } (k)$
(see \cite[\S 9.1]{notes} and \eqref{eq:defBk}) can be shown to satisfy
\begin{equation}
\label{eq:cc}B_{V_0 } ( k ) \to  -\pi \ell \sum_{ n \in \mathbb Z^2 } \left( 
\delta ( k_1 + n_1 ) + \delta( k_2 + n_2) \right), \end{equation}
in the sense of distributions on $ \mathbb R^2 $. We illustrate this in Figure~\ref{f:curv}.

\end{document}